\newtheorem{theorem}{Theorem}
\newtheorem{corollary}[theorem]{Corollary}
\newtheorem{lemma}[theorem]{Lemma}
\newtheorem{proposition}[theorem]{Proposition}
\newtheorem{remark}[theorem]{Remark}
\newcommand{\un}{\hbox{\bf 1}}
\newcommand{\id}{id}
\begin{document}


\title[${\mathsf{T}}$-ordering and the Magnus expansion]{Time-ordering and a generalized Magnus expansion} 

\date{June 18, 2012}


\author{Michel Bauer}
\address{Institut de Physique Th\'eorique de Saclay 
         			CEA-Saclay, 
			91191 Gif-sur-Yvette, France.}
\email{michel.bauer@cea.fr}
\urladdr{http://ipht.cea.fr/Pisp/michel.bauer/}

\author{Raphael Chetrite}
\address{Laboratoire J.-A.~Dieudonn\'e
         		UMR 6621, CNRS,
         		Parc Valrose,
         		06108 Nice Cedex 02, France.}
\email{Raphael.Chetrite@unice.fr}
\urladdr{http://math.unice.fr/$\sim$rchetrit/}

\author{Kurusch Ebrahimi-Fard}
\address{ICMAT,
		C/Nicol\'as Cabrera, no.~13-15, 28049 Madrid, Spain.
		On leave from UHA, Mulhouse, France}
         \email{kurusch@icmat.es, kurusch.ebrahimi-fard@uha.fr}         
         \urladdr{www.icmat.es/kurusch}

\author{Fr\'ed\'eric Patras}
\address{Laboratoire J.-A.~Dieudonn\'e
         		UMR 6621, CNRS,
         		Parc Valrose,
         		06108 Nice Cedex 02, France.}
\email{patras@math.unice.fr}
\urladdr{www-math.unice.fr/$\sim$patras}


\begin{abstract}
Both the classical time-ordering and the Magnus expansion are well-known in the context of linear initial value problems. Motivated by the noncommutativity between time-ordering and time derivation, and related problems raised recently in statistical physics, we introduce a generalization of the Magnus expansion. Whereas the classical expansion computes the logarithm of the evolution operator of a linear differential equation, our generalization addresses the same problem, including however directly a non-trivial initial condition. As a by-product we recover a variant of the time ordering operation, known as ${\mathsf{T}}^\ast$-ordering. Eventually, placing our results in the general context of Rota--Baxter algebras permits us to present them in a more natural algebraic setting. It encompasses, for example, the case where one considers linear difference equations instead of linear differential equations.
\end{abstract}

\maketitle
\tableofcontents

\section*{Introduction}
\label{sect:intro}

The time-ordered exponential $Y(t) = {\mathsf{T}}\!\exp\bigl(\int_0^t A(s)ds\bigr)$ solves the first order linear differential equation: 
$$
	\dot{Y}(t)= Y(t)A(t),
$$ 
with initial value $Y(0) = \un$. It is complemented by the solution written as a proper exponential of Magnus'  expansion, $Y(t)=\exp\left(\Omega(A)(t)\right)$. Therefore time-ordering and Magnus' expansion are naturally related. Indeed, the function $\Omega(A)(t)$ is simply defined as the logarithm of the time-ordered exponential. It is characterized as the solution of a particular non-linear differential equation:
$$
		\dot{\Omega}(A)(t) = \frac{-ad_{\Omega(A)}}{e^{-ad_{\Omega(A)}} - 1}(A)(t),
$$
and can be calculated recursively, which in turn is accompanied by formidable computational challenges due to intricate algebraic and combinatorial structures.  

Linear initial value problems are abundant in the mathematical sciences. And therefore the time- or  ${\mathsf{T}}$-ordering operation plays an essential role in many fields, for instance in quantum theory and statistical physics. In concrete applications one often needs to calculate time derivations of time-ordered products. However, the noncommutativity of the ${\mathsf{T}}$-ordering operation and time derivation causes obstacles. Our work started out with the aim to understand a simple though surprising paradox caused by the aforementioned noncommutativity. It turns out that physicists introduced a variant of the time-ordering operation, denoted ${\mathsf{T}^\ast}$, which precisely encodes the lack of commutativity between the time-ordering prescription and time derivations. As part of this work we explore the relation between this modified time-ordering operation and the Magnus expansion (Theorem \ref{thm:TRUE}). This lead to a generalization of Magnus' result, allowing to include the initial value into the Magnus expansion (Theorem \ref{thm:RB0}).     

One may consider the time-ordered exponential as the formal solution of the linear integral equation: 
$$
	Y(t)=Y_0+ \int_0^tY(s)A(s)ds
$$
corresponding to the initial value problem. It recently became clear \cite{EM2} that such integral equations, seen as linear fixpoint equations, are just a particular example of a more general class of abstract linear fixpoint equations:
$$
	Y = \un + R(AY) 
$$
defined in associative Rota--Baxter algebras (Theorem \ref{thm:pLMagnus}). We therefore embedded the generalized Magnus expansion into the context of such algebras, by showing in detail a theorem that specializes to the particular case (Theorem \ref{thm:newMagnus}).                 

Let us remark that both time-ordering and Magnus' expansion have triggered much attention and progress in applied mathematics and physics. See for instance \cite{BCOR09,IMKNZ,MielPleb,Strichartz,Wilcox}. Recently, however, Magnus' work has also been explored successfully from a more algebraic-combinatorial perspective using operads, rooted trees, Hopf algebras, pre-Lie algebras, and the theory of noncommutative symmetric functions \cite{CP,EM1,EM2,gelfand,Murua}.
 
\smallskip 

Finally, as a word of warning, we should mention that we deliberately ignore any convergence issues. This work addresses mainly formal algebraic and related combinatorial aspects.

\medskip 

The paper is organized as follows. In the first section we introduce the objects we are going to consider in this work, i.e.~the time-ordering map and the Magnus expansion. In section \ref{sect:TordTime} we introduce a variant of the time-ordering operation, which reflects the non-commutativity of time-ordering and time derivation. The connection to the Magnus expansion is detailed. Section \ref{sect:LinRecRB} recalls briefly the notion of Rota--Baxter algebras, and we then embed the foregoing results into this context. We also show a mild generalization of Atkinson's factorization theorem for Rota--Baxter algebras. The notion of $R$-center is introduced, which plays an important role. Finally, in section \ref{ssect:finitediff} we present finite differences as a simple example.


\section{Time-ordering and the Magnus expansion}
\label{sect:TordMagnus}


\subsection{The ${\mathsf{T}}$-ordering map}
\label{ssect:Tord}

Linear initial value problems (IVP) play a major role in many parts of applied mathematics and theoretical physics. They consist of an ordinary linear differential equation for an unknown function $Y(t)$:
\begin{align}
	\frac{d}{dt} Y(t)  =Y(t)A(t)
	 \label{eq:IVP}
\end{align}
together with a specified value of this function at a given point in time, e.g.~say for $t=0$, $Y(0)  = Y_0$, which we call the initial value. The functions $Y(t)$ and $A(t)$ being for instance matrix valued and differentiable. 

Further simplifying the problem, e.g.~by restricting to scalar valued functions, the solution of (\ref{eq:IVP}) is given straightforwardly  in terms of the exponential map:
\begin{equation}
\label{eq:simple}
	Y(t)=Y_0\exp\left(\int_0^t A(s) ds\right).
\end{equation}
Returning to the more challenging case, when noncommutativity prevails, we see that the result is far from simple. Recall that the formal solution of (\ref{eq:IVP}) is given by the series of iterated integrals:       
\begin{equation}
\label{eq:Torder}
	 Y(t) = Y_0\left( \un + \int_0^t A(s_1) ds_1 \
	   + \int_0^t \int_0^{s_1} A(s_2) ds_2\ A(s_1)ds_1\
	   + \cdots \right),
\end{equation}
where $\un$ would be the identity matrix in the context of matrix valued differentiable functions. The right hand side, known as Dyson--Chen or ${\mathsf{T}}$(ime)-ordered series, follows forthrightly from the solution of the integral equation: 
\begin{eqnarray}
\label{eq:recursion}
	Y(t)=Y_0+ \int_0^tY(s)A(s)ds
\end{eqnarray}
corresponding to (\ref{eq:IVP}). Note that in the commutative case series (\ref{eq:Torder}) simply follows from expanding the exponential map (\ref{eq:simple}) and making use of the integration by parts rule:
\begin{eqnarray}
\label{eq:RB0}
	\int_0^tf(s)ds\ \int_0^tg(u)du = \int_0^t\int_0^sf(s)g(u)du\, ds + \int_0^t\int_0^uf(s) g(u)ds\,du. 
\end{eqnarray}

Let us recall the notion of time- or ${\mathsf{T}}$-ordering. Time ordering is defined in terms of the ${\mathsf{T}}$-map:
\begin{equation*}
	{\mathsf{T}}[U(s_1)V(s_2)] = \begin{cases} 	U(s_1)V(s_2),& \quad s_1<s_2\\
					 				V(s_2)U(s_1),&\quad s_2<s_1.
						    \end{cases}
\end{equation*}
Written more concisely it involves Heaviside step functions \cite{Fried}:
$$
	{\mathsf{T}}[U(s_1)V(s_2)]:=\Theta(s_2-s_1)U(s_1)V(s_2) + \Theta(s_1-s_2)V(s_2)U(s_1),
$$
with a generalization to higher products of operators:
$$
	{\mathsf{T}}[U_1(s_1) \cdots U_n(s_n)]:=\sum_{\sigma \in S_n} \Theta^\sigma_n 
								U_{\sigma(1)}(s_{\sigma(1)}) \cdots U_{\sigma(n)}(s_{\sigma(n)}),
$$
where $\Theta^\sigma_n:= \prod_{i=1}^n  \Theta(s_{\sigma(1)}-s_{\sigma(2)}) \cdots \Theta(s_{\sigma(n-1)}-s_{\sigma(n)})$. Observe that for any permutation $\sigma \in S_n$, ${\mathsf{T}}[U_1(s_1) \cdots U_n(s_n)]={\mathsf{T}}[U_{\sigma(1)}(s_{\sigma(1)}) \cdots U_{\sigma(n)}(s_{\sigma(n)})]$. Following the convention that the ${\mathsf{T}}$-map operates before integration, the solution of the IVP (\ref{eq:IVP}) writes:
$$
	Y(t)=Y_0{\mathsf{T}}\!\exp\left(\int_0^t A(s) ds\right)
	     :=Y_0\sum_{n\ge 0} \frac{1}{n!}\int_0^t \cdots \int_0^t\  {\mathsf{T}}[A(s_1)\cdots A(s_n)]ds_1\cdots ds_n.
$$   
Each $n$-fold ${\mathsf{T}}$-product inside the integral produces $n!$ strictly iterated integrals, and therefore the series coincides with (\ref{eq:Torder}).   

As a remark we would like to mention that our work started out with the aim to understand a simple though surprising paradox caused by the noncommutativity of the ${\mathsf{T}}$-ordering operation and time derivation. See \cite{BGL} for an interesting example where this plays a decisive role, and \cite{Talkner} for a case where this paradox caused some confusion. We assume nonnegative $t$, and $A_t:=A(t)$ may be a finite-dimensional matrix valued function, or more generally a linear operator valued map. The following identity:
 \begin{equation*}
	\exp\left(A_{t}-A_{0}\right)=\exp\left(\int_{0}^{t}du\frac{dA_{u}}{du}\right)
\end{equation*}
may be considered as evident. However, its ${\mathsf{T}}$-ordering cousin, which would result from a naive action of the ${\mathsf{T}}$-ordering operation on both sides of the foregoing equation: 
\begin{equation}
\label{eq:paradox}
	{\mathsf{T}}\!\exp\left(A_{t}-A_{0}\right)={\mathsf{T}}\!\exp\left(\int_{0}^{t}du\frac{dA_{u}}{du}\right)
\end{equation}
does not hold. Indeed, let us look at the second order term of each side separately:
$$
	{\mathsf{T}}\!\bigl[(A_{t}-A_{0})^2\bigr]=A_0A_0 - 2 A_0A_t + A_tA_t,
$$
whereas:
\allowdisplaybreaks{
\begin{eqnarray*}
	{\mathsf{T}}\!\left[(\int_{0}^{t}du\frac{dA_{u}}{du})^{2}\right] 
	& = & \int_{0}^{t}du\int_{u}^{t}dv\frac{dA_{u}}{du}\frac{dA_{v}}{dv}
			+\int_{0}^{t}du\int_{0}^{u}dv\frac{dA_{v}}{dv}\frac{dA_{u}}{du}\\
 	& = & 2\int_{0}^{t}du\int_{u}^{t}dv\frac{dA_{u}}{du}\frac{dA_{v}}{dv}
	   =	2\int_{0}^{t}du\frac{dA_{u}}{du}(A_{t}-A_{u})\\
 	& = & 2(A_{t}-A_{0})A_{t}-2\int_{0}^{t}du\frac{dA_{u}}{du}A_{u}.
 \end{eqnarray*}}
Hence, we have verified that:
\begin{equation*}
	{\mathsf{T}}\!\bigl[(A_{t}-A_{0})^{2}\bigr] 
	\neq {\mathsf{T}}\!\left[\bigl(\int_{0}^{t}du\frac{dA_{u}}{du}\bigr)^{2}\right].
\end{equation*}

In fact, it is clear that (\ref{eq:paradox}) does not hold due to the lack of commutativity of the ${\mathsf{T}}$-ordering with time derivation, which, on the other hand, is caused by the Heaviside step functions involved in the definition of the ${\mathsf{T}}$-ordering map. However, observe that the difference reads:
\allowdisplaybreaks{
\begin{eqnarray*}
	\lefteqn{{\mathsf{T}}\!\bigl[\left(A_{t}-A_{0}\right)^{2}\bigr]
		-{\mathsf{T}}\!\left[\bigl(\int_{0}^{t}du\frac{dA_{u}}{du}\bigr)^{2}\right] 
		= -A_{t}A_{t}+A_{0}A_{0}+2\int_{0}^{t}du\frac{dA_{u}}{du}A_{u}}\nonumber \\
 		& = & -A_{t}A_{t}+A_{0}A_{0}+\int_{0}^{t}du\left(\frac{dA_{u}}{du}A_{u}
				+A_{u}\frac{dA_{u}}{du}\right)+\int_{0}^{t}du\left(\frac{dA_{u}}{du}A_{u}
				-A_{u}\frac{dA_{u}}{du}\right)\nonumber \\
 		& = & -A_{t}A_{t}+A_{0}A_{0}+A_{t}A_{t}-A_{0}A_{0}
			+\int_{0}^{t}du\left(\frac{dA_{u}}{du}A_{u}-A_{u}\frac{dA_{u}}{du}\right)\nonumber \\
 		& = & \int_{0}^{t}du\left[\frac{dA_{u}}{du},A_{u}\right].
 \end{eqnarray*}}
In the sequel we will further remark on this observation, and explore in more detail the object, denoted $\Phi\bigl(\frac{dA_{u}}{du}\bigr)$, such that the following equality holds:
\begin{equation}\label{defPhi}
	{\mathsf{T}}\!\exp\left(A_{t}-A_{0}\right)={\mathsf{T}}\!\exp\left(\int_{0}^{t}du\ \Phi\bigl(\frac{dA_{u}}{du}\bigr)\right).
\end{equation}


\subsection{The Magnus expansion}
\label{ssect:Magnus}

It is clear that the proper exponential solution of (\ref{eq:IVP}) changes drastically in the light of the noncommutative character of the problem. Wilhelm Magnus proposed in his seminal 1954 paper \cite{Magnus} a particular differential equation for the logarithm of the series (\ref{eq:Torder}) of iterated integrals (with initial value $Y_0=\un$), which we denote by the function $\Omega(A)(t)$:
$$
	\dot{\Omega}(A)(t) =A(t) +  \sum_{n>0} (-1)^n\frac{B_n}{n!} ad_{\int_0^t \dot{\Omega}(A)(s)ds}^{(n)}(A(s)) 
					= \frac{-ad_{\Omega(A)}}{e^{-ad_{\Omega(A)}} - 1}(A)(t),
$$ 
with $\Omega(A)(0)=0$. The $B_n$ are the Bernoulli numbers $B_0=1,\ B_1=-\frac 12,\ B_2=\frac 16,\ldots \ {\rm{and}}\;\; B_{2k+1}=0 \hbox{ for }k\ge 1$, such that the solution of the IVP (\ref{eq:IVP}) (with $Y_0=\un$) is given by: 
\begin{equation}
	Y(t)=\exp\left(\int_0^t \dot{\Omega}(A)(s)ds\right)
	\label{magclas}.
\end{equation}
Here, as usual, the $n$-fold iterated Lie bracket is denoted by $ad^{(n)}_a(b):=[a,[a,\cdots [a,b]]\cdots]$. 
Recall that:
$$
	\frac{-z}{\exp(-z)-1}=\sum_{n \geq 0} (-1)^n\frac{B_n}{n!} z^n 
	\quad\ {\rm{and}}\ \quad
	\frac{\exp(-z)-1}{-z}=\int^1_0 \exp(-sz)ds.
$$ 
The reader is encouraged to consult \cite{BCOR09} for an authoritative review on the Magnus expansion and its ramifications. References \cite{MielPleb,Strichartz,Wilcox} are classical sources regarding Magnus' work. Let us write down the first few terms of what is called Magnus' series, $\Omega(\lambda A)(s)=\sum_{n \ge 1} \lambda^n \Omega_{n}(A)(s)$, following from Picard iteration to solve the above recursion:
\allowdisplaybreaks{
\begin{eqnarray*}
	\frac{d}{dt}{\Omega}(\lambda A)(t)
			&=&\lambda A(t) 
				+ \lambda^2 \frac{1}{2}\left[\int_0^t\!\!\!\!A(s)ds ,A(t)\right] \\
			&&\quad +\lambda^3 \frac{1}{4}\left[\int_0^t\!\Bigl[\int_0^s\!\!\!\!A(u)du ,A(s)\Bigr]ds,A(t)\right]
		+\lambda^3 \frac{1}{12}\left[\int_0^t\!\!\!\!A(s)ds,\Bigl[\int_0^t\!\!\!\!A(u)du ,A(t)\Bigr]\right] +\cdots,
\end{eqnarray*}}
where we introduced a parameter $\lambda$ for later convenience. In the light of the IVP (\ref{eq:IVP}) and its two solutions, i.e.~the proper exponential solution using Magnus' expansion, and the ${\mathsf{T}}$-ordered exponential, we find the identity (for $Y_0=\un$):
\begin{equation}
\label{magnus}
 	{\mathsf{T}}\!\exp\left(\int_0^t A(s) ds\right) = \exp\left(\int_0^t \dot{\Omega}(A)(s)ds\right). 
\end{equation}
The terms beyond order one in $\Omega(\lambda A)(s)= \lambda\int_0^t A(s) ds + \sum_{n \ge 2} \lambda^n \Omega_{n}(A)(s)$ consist of iterated Lie brackets and integrals. For each $n>1$ they solely serve to rewrite higher powers $(\int_0^t A(s) ds)^n$ as strictly iterated integral of order $n$, modulo a factor $n!$. For instance, verify that:
$$
	\frac{1}{2}\left(\int_0^t A(s) ds\right)^2 +  \frac{1}{2}\int_0^t \left[\int_0^s A(u)du ,A(s)\right]ds 
	= \int_0^t  \int_0^s A(u) du\ A(s)ds
$$   

We remark that recently, Magnus' work has been successfully explored from a more algebraic-combinatorial perspective using the theory of operads, rooted trees, pre-Lie algebras, Rota--Baxter algebras, and the theory of noncommutative symmetric functions \cite{CP,EM1,EM2,gelfand,IMKNZ,Murua}.


\section{${\mathsf{T}}$-ordering, ${\mathsf{T}}^\ast$-ordering and time derivations}
\label{sect:TordTime}

\subsection{${\mathsf{T}}$-ordering and time derivations}
\label{ssect:TordTime}

Let us return to the erroneous identity (\ref{eq:paradox}). Recall that to simplify the presentation, we sometimes write $A_t$ for $A(t)$. 

\begin{theorem} \label{thm:TRUE}
The solution $\Phi$ to the equation:
\begin{equation*}
	{\mathsf{T}}\!\exp\left(A_{t}-A_{0}\right)=
	{\mathsf{T}}\!\exp\left(\int_{0}^{t}du\ \Phi\bigl(\frac{dA_{u}}{du}\bigr)\right)
\end{equation*}
is explicitly given by:
\begin{equation}
	\Phi\bigl(\frac{dA_{u}}{du}\bigr)
	:=\sum_{n=1}^{\infty}\frac{1}{n!}\sum_{m=0}^{n-1}\left(-1\right)^{m}C_{n-1}^{m}
		\left(A_{u}\right)^{m}\frac{dA_{u}}{du}\left(A_{u}\right)^{n-1-m}
	 =\frac{e^{-ad_{A_u}}-1}{-ad_{A_u}}\left(\frac{dA_u}{du}\right)
\label{eq:goodT}
\end{equation}
\end{theorem}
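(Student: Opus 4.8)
The plan is to evaluate the two sides of the defining equation separately and to check that each of them equals the factorized exponential $e^{-A_0}e^{A_t}$. First I would compute the left-hand side ${\mathsf{T}}\!\exp(A_t-A_0)$ by hand. The decisive feature is that its argument only ever refers to the two fixed times $0$ and $t$, with $0<t$. Expanding $(A_t-A_0)^n$ into $2^n$ monomials and letting ${\mathsf{T}}$ act, every occurrence of $A_0$ is pushed to the left of every occurrence of $A_t$ (time increases to the right in the convention fixed in Section \ref{ssect:Tord}), whereas two factors sitting at the same time are literally the same operator, so their relative order is immaterial. Consequently a monomial containing $j$ factors $-A_0$ collapses to $(-1)^jA_0^jA_t^{n-j}$, and exactly $\binom{n}{j}$ of the monomials do so. Summing over $n$ with the weight $1/n!$ and using $\frac{1}{n!}\binom{n}{j}=\frac{1}{j!(n-j)!}$ factorizes the double series, yielding ${\mathsf{T}}\!\exp(A_t-A_0)=\bigl(\sum_{j\ge0}\frac{(-A_0)^j}{j!}\bigr)\bigl(\sum_{k\ge0}\frac{A_t^k}{k!}\bigr)=e^{-A_0}e^{A_t}$.

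Next I would treat the right-hand side. Setting $B(u):=\Phi(dA_u/du)$, the object $Y(t):={\mathsf{T}}\!\exp(\int_0^t B(u)\,du)$ is, by the discussion in Section \ref{ssect:Tord}, the unique solution of the linear initial value problem $\dot Y(t)=Y(t)B(t)$ with $Y(0)=\un$. It therefore suffices to verify that the candidate $Y(t)=e^{-A_0}e^{A_t}$ satisfies this same problem. The initial condition is clear since $e^{-A_0}e^{A_0}=\un$. For the derivative I would use the derivative-of-the-exponential formula $\frac{d}{dt}e^{A_t}=e^{A_t}\,\frac{1-e^{-ad_{A_t}}}{ad_{A_t}}\bigl(\frac{dA_t}{dt}\bigr)$, which follows from $\frac{d}{dt}e^{A_t}=\int_0^1 e^{sA_t}\frac{dA_t}{dt}e^{(1-s)A_t}\,ds$ after factoring $e^{A_t}$ to the left and invoking the identity $\int_0^1 e^{-sz}\,ds=\frac{e^{-z}-1}{-z}$ recorded in Section \ref{ssect:Magnus}. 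Since $\frac{1-e^{-ad_{A_t}}}{ad_{A_t}}=\frac{e^{-ad_{A_t}}-1}{-ad_{A_t}}$ is precisely the closed form of $\Phi$, this gives $\frac{d}{dt}\bigl(e^{-A_0}e^{A_t}\bigr)=e^{-A_0}e^{A_t}\Phi\bigl(\frac{dA_t}{dt}\bigr)=Y(t)B(t)$, so by uniqueness the right-hand side equals $e^{-A_0}e^{A_t}$ as well, and the two sides of the theorem coincide.

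It remains to reconcile the two displayed formulas for $\Phi$. Writing the closed form as $\Phi\bigl(\frac{dA_u}{du}\bigr)=\sum_{n\ge1}\frac{(-1)^{n-1}}{n!}ad_{A_u}^{(n-1)}\bigl(\frac{dA_u}{du}\bigr)$ and expanding the iterated adjoint action via $ad_{A_u}^{(n-1)}(B)=\sum_{k=0}^{n-1}\binom{n-1}{k}(-1)^kA_u^{n-1-k}BA_u^{k}$, the substitution $m=n-1-k$ turns $(-1)^{n-1}(-1)^k$ into $(-1)^m$ and $\binom{n-1}{k}$ into $\binom{n-1}{m}$, reproducing the explicit sum in (\ref{eq:goodT}). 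This also makes transparent that $\Phi$ is the ``$\mathrm{dexp}$'' generating function $\frac{e^{-z}-1}{-z}$, the formal inverse of the factor $\frac{-z}{e^{-z}-1}$ appearing in the classical Magnus recursion.

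The hard part, and essentially the only point demanding care, will be making the sorting argument for ${\mathsf{T}}\!\exp(A_t-A_0)$ rigorous despite the singular behaviour of the Heaviside weights at coincident times. The clean way around this is to observe that the only genuine reordering occurs between the two distinct time slots $0$ and $t$, and always moves $A_0$ to the left, while all coincident-time rearrangements are trivial identities of identical operators; once this is granted the remaining combinatorics is routine.
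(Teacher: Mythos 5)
Your proposal is correct and follows essentially the same route as the paper: both arguments rest on the factorization ${\mathsf{T}}\!\exp\left(A_{t}-A_{0}\right)=\exp\left(-A_{0}\right)\exp\left(A_{t}\right)$, on the characterization of the time-ordered exponential as the unique solution of $\dot{Y}_t=Y_tB_t$, $Y_0=\un$, and on Duhamel's formula together with the identity $\int_{0}^{1}d\nu\, e^{-\nu\, ad_a}(b)=\frac{e^{-ad_a}-1}{-ad_a}(b)$, followed by the same iterated-adjoint expansion to match the two displayed forms of $\Phi$. The only difference is one of direction --- you verify the stated closed form by uniqueness of the IVP solution, while the paper differentiates ${\mathsf{T}}\!\exp\left(A_{t}-A_{0}\right)$ and reads off $\Phi$ --- which is the same computation run backwards.
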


Observe that if $A_u$ commutes with its derivation, $\left[\frac{dA_{u}}{du},A_{u}\right]=0$, then, as expected equation (\ref{eq:goodT}) reduces to $\Phi\bigl(\frac{dA_{u}}{du}\bigr)=\frac{dA_{u}}{du}$.

In fact, equation (\ref{eq:goodT}) follows from exploring briefly what went wrong in the first place with identity (\ref{eq:paradox}). Indeed, recall that $X_t = {\mathsf{T}}\!\exp\left(\int_{0}^{t}duB_{u}\right)$ for a family of operator valued maps $B_t$ is the solution to the simple IVP $\frac{d}{dt}X_{t} =  X_{t}B_{t}$, $X_{0} =  \un.$ Hence:
\begin{equation*}
	\frac{d}{dt}{\mathsf{T}}\!\exp\left(\int_{0}^{t}du\frac{dA_{u}}{du}\right)
	={\mathsf{T}}\!\exp\left(\int_{0}^{t}du\frac{dA_{u}}{du}\right)\frac{dA_{t}}{dt}.
\end{equation*}
On the other hand it is clear from its definition, that:
\begin{equation*}
	{\mathsf{T}}\!\exp\left(A_{t}-A_{0}\right)=\exp\left(-A_{0}\right)\exp\left(A_{t}\right).
\end{equation*}
Therefore:
\allowdisplaybreaks{
\begin{eqnarray}
	\frac{d}{dt}{\mathsf{T}}\!\exp\left(A_{t}-A_{0}\right) 
	& = & \exp\left(-A_{0}\right)\frac{d}{dt}\exp\left(A_{t}\right)
	    = \exp\left(-A_{0}\right)\exp\left(A_{t}\right)\exp\left(-A_{t}\right)\frac{d}{dt}\exp\left(A_{t}\right)\label{eq:forint}\\
 	& = & \left({\mathsf{T}}\!\exp\left(A_{t}-A_{0}\right)\right)\exp\left(-A_{t}\right)\frac{d}{dt}\exp\left(A_{t}\right), \nonumber
\end{eqnarray}}
which again shows that (\ref{eq:paradox}) can not be correct. 

However, the last equality may be used to derive (\ref{eq:goodT}). Indeed (\ref{eq:forint}) implies:
\begin{equation*}
	{\mathsf{T}}\!\exp\left(A_{t}-A_{0}\right)
	={\mathsf{T}}\!\exp\left(\int_{0}^{t}du\exp\left(-A_{u}\right)\frac{d}{du}\exp\left(A_{u}\right)\right).
\end{equation*}
Using Duhamel's formula:
\begin{equation}
	\exp\left(-A_{u}\right)\frac{d}{du}\exp\left(A_{u}\right)
			=\int_{0}^{1}d\nu\exp\left(-\nu A_{u}\right)\frac{dA_{u}}{du}\exp\left(\nu A_{u}\right),
\label{eq:duh}
\end{equation}
and recalling that: 
$$
	\int_{0}^{1}d\nu\exp(-\nu a)b\exp(\nu a)=
	\sum_{i \geq 1} \frac{\left(-1\right)^{i-1}}{i!}ad_a^{(i-1)}(b)=\frac{e^{-ad_a}-1}{-ad_a}(b),
$$
yields:
\begin{equation}
	\exp\left(-A_{0}\right)\exp\left(A_{t}\right)=\mathsf{T}\exp\left(A_{t}-A_{0}\right)
	 =  {\mathsf{T}}\!\exp\left(\int_{0}^{t}du\frac{e^{-ad_{A_u}}-1}{-ad_{A_u}}\Bigl(\frac{dA_u}{du}\Bigr)\right).
\label{fred}
\end{equation}
This implies (\ref{eq:goodT}) because $ad_{a}^{(m)}(b)=\sum_{n=0}^{m-n}\left(-1\right)^{n}C_{m}^{n}a^{n}ba^{m-n}$. Conversely, if we set $B_t:=\Phi\bigl(\frac{dA_{u}}{du}\bigr)$ and $X_t:=\exp(A_t)$, we arrive at the following first generalization of the classical result of Magnus (\ref{magclas}).

\begin{theorem} \label{thm:RB0}
The IVP: $ \dot{X}_{t} =  X_{t}B_{t}$, $X_{0} =  \exp(\alpha)$ is solved by:
\begin{equation*}
	X_t=\exp\left(\alpha + \int_0^t \dot{\Omega}_\alpha(B)(s)ds\right),
\end{equation*}
with:
\begin{equation*}
	\dot{\Omega}_\alpha(B)(u):= \frac{-ad_{\alpha + \int_0^u \dot{\Omega}_\alpha(B)(s)ds}}
							   {e^{-ad_{\alpha+ \int_0^u \dot{\Omega}_\alpha(B)(s)ds}}-1}(B(u)).
\end{equation*}\\
\end{theorem}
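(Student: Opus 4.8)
The plan is to verify directly that the proposed $X_t$ solves the initial value problem, reducing everything to the derivative-of-exponential (Duhamel) formula already recorded in (\ref{eq:duh}). I would write $\Psi(t) := \alpha + \int_0^t \dot{\Omega}_\alpha(B)(s)ds$, so that $\Psi(0)=\alpha$, $\dot{\Psi}(t)=\dot{\Omega}_\alpha(B)(t)$, and $X_t=\exp(\Psi(t))$. The initial condition $X_0=\exp(\alpha)$ is then immediate, and it only remains to compute $\dot{X}_t$ and match it against $X_t B_t$.

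First I would settle the well-definedness of $\dot{\Omega}_\alpha(B)$. Its defining equation is implicit, since the right-hand side depends on $\int_0^u \dot{\Omega}_\alpha(B)(s)ds$, so, exactly as for the classical Magnus recursion, I would solve it by Picard iteration after introducing a grading. Scaling $B\mapsto\lambda B$ and expanding $\dot{\Omega}_\alpha(\lambda B)=\sum_{n\ge1}\lambda^n\dot{\Omega}_{\alpha,n}(B)$, the operator $\frac{-ad_\Psi}{e^{-ad_\Psi}-1}=\sum_{n\ge0}(-1)^n\frac{B_n}{n!}(ad_\Psi)^n$ expresses each coefficient in terms of strictly lower-order ones, so the recursion determines all terms uniquely in the formal setting adopted throughout the paper.

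The core computation is then a single application of Duhamel. Differentiating $X_t=\exp(\Psi(t))$ and using (\ref{eq:duh}) together with the integral identity quoted immediately after it gives
\begin{equation*}
	\exp(-\Psi(t))\frac{d}{dt}\exp(\Psi(t)) = \int_0^1 d\nu\,\exp(-\nu\Psi(t))\dot{\Psi}(t)\exp(\nu\Psi(t)) = \frac{e^{-ad_{\Psi(t)}}-1}{-ad_{\Psi(t)}}\bigl(\dot{\Psi}(t)\bigr).
\end{equation*}
Now I would substitute the defining relation $\dot{\Psi}(t)=\frac{-ad_{\Psi(t)}}{e^{-ad_{\Psi(t)}}-1}(B_t)$. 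Since the operator $\frac{e^{-ad_\Psi}-1}{-ad_\Psi}$ is invertible as a formal power series in $ad_\Psi$ (its constant term is the identity), with inverse $\frac{-ad_\Psi}{e^{-ad_\Psi}-1}$, the two operators cancel and leave $\exp(-\Psi(t))\dot{X}_t=B_t$, that is, $\dot{X}_t=X_t B_t$, as desired.

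Alternatively, the statement follows at once from the discussion preceding it: taking $\alpha=A_0$ and $A_t:=\alpha+\int_0^t\dot{\Omega}_\alpha(B)(s)ds$, equation (\ref{fred}) already identifies $\exp(A_t)$ with the solution of $\dot{X}_t=X_t B_t$ for $B_t=\Phi\bigl(\frac{dA_t}{dt}\bigr)=\frac{e^{-ad_{A_t}}-1}{-ad_{A_t}}\bigl(\frac{dA_t}{dt}\bigr)$; inverting $\Phi$ turns this last relation into precisely the recursion defining $\dot{\Omega}_\alpha(B)$. In either route the genuine content is the inversion step, so the point I would be most careful about is that $\frac{e^{-ad_\Psi}-1}{-ad_\Psi}$ may legitimately be inverted and that the implicit recursion for $\dot{\Omega}_\alpha$ closes — which is exactly why the grading by $\lambda$ is introduced. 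With convergence questions set aside per the paper's convention, no further difficulty arises.
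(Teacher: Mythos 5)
Your proposal is correct and takes essentially the same route as the paper: the paper obtains Theorem \ref{thm:RB0} precisely by reading equation (\ref{fred}) --- itself a consequence of Duhamel's formula (\ref{eq:duh}) --- conversely, setting $B_t:=\Phi(\dot{A}_t)$ and $X_t:=\exp(A_t)$ and inverting the formal series $\frac{e^{-ad_{\Psi}}-1}{-ad_{\Psi}}$, which is exactly your core computation and, verbatim, your stated alternative. Your extra paragraph on well-definedness of the implicit recursion via the $\lambda$-grading (noting that the $ad_\alpha$ part carries degree zero, so each coefficient is an $ad_\alpha$-series determined by strictly lower-order ones) merely makes explicit what the paper leaves implicit.
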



\subsection{${\mathsf{T}}^\ast$-ordering and the Magnus formula}

Recall that the modified time- or ${\mathsf{T}}^\ast$-ordering is defined precisely so as to encode appropriately the lack of commutativity between the ${\mathsf{T}}$-ordering and time derivations. In \cite{BGL} the interested reader can find motivations for the introduction of this particular variant of the time ordering operator coming from statistical physics, as well as further bibliographical references on the subject.

For $I(t):=\frac{dQ(t)}{dt}$, we define:
\begin{equation*}
	{\mathsf{T}}^\ast[I(t_1) \cdots I(t_n)]
	:=\frac{\partial}{\partial t_n} \cdots \frac{\partial}{\partial t_1}{\mathsf{T}}[Q(t_1) \cdots Q(t_n)].
\end{equation*}
For any permutation $\sigma \in S_n$: 
\begin{equation*}
	{\mathsf{T}}^\ast[I(t_1) \cdots I(t_n)]={\mathsf{T}}^\ast[I(t_{\sigma(1)}) \cdots I(t_{\sigma(n)})],
\end{equation*} 
or, equivalently, we have an invariance under the right action of the symmetric group, $S_n$, in $n$ elements, ${\mathsf{T}}^\ast={\mathsf{T}}^\ast\circ{\sigma }$. In particular, because of this property, ${\mathsf{T}}^\ast$ is entirely characterized by its action on time-ordered products $I(t_1) \cdots I(t_n)$, $t_1\leq ...\leq t_n$, or, equivalently on products $I(t_1)^{m_1} \cdots I(t_n)^{m_n}$, $t_1< \cdots < t_n$.

Therefore we get, by setting $I(t)=\dot{A}(t)$ and $Q(t)=A(t)-A(0)$:
\begin{eqnarray*}
	{\mathsf{T}}^\ast\!\exp\left(\int_{0}^{t}du\dot{A}_{u}\right)
	&=&\sum_n\frac{1}{n!}\int\limits_0^t \cdots \int\limits_0^t \prod_{i=1}^ndt_i
	\frac{\partial}{\partial t_n}\cdots \frac{\partial}{\partial t_1}{\mathsf{T}}\bigl[(A(t_1)-A(0)) \cdots (A(t_n)-A(0))\bigr]\\
	&=& \sum_n\frac{1}{n!}{\mathsf{T}}\bigl[(A(t)-A(0)) \cdots (A(t)-A(0))\bigr]\\
	&=&\sum_n\sum_{k\leq n}\frac{1}{n!}{n\choose{k}}(-1)^kA(0)^kA(t)^{n-k}=\exp(-A(0))\exp(A(t)).
\end{eqnarray*}
Equation (\ref{fred}) yields:
\begin{equation}
\label{tstar}
	{\mathsf{T}}^\ast\!\exp\left(\int_{0}^{t}du\dot{A}_{u}\right)
	={\mathsf{T}}\!\exp\left(\int_{0}^{t}du\ \frac{e^{-ad_{A_u}}-1}{-ad_{A_u}}\bigl(\dot{A_u}\bigr)\right).
\end{equation}

Let us look at the degree two terms in $A$ and $\dot{A}$ on both sides of this equality. We find:
$$
	\frac{1}{2!} \int_0^t\int_0^t T^*[\dot{A}_{u_1}\dot{A}_{u_2}] du_1du_2 
	= \frac{1}{2!} \int_0^t\int_0^t T[\dot{A}_{u_1}\dot{A}_{u_2}] du_1du_2 + \frac{1}{2!}\int_0^t ad_{A_{u_1}}(\dot{A}_{u_1})du_1.
$$ 
Simply rewriting the last term on the right hand side:
$$
	 \frac{1}{2!}\int_0^t ad_{A_{u_1}}(\dot{A}_{u_1})du_1 = \frac{1}{2!} \int_0^t\int_0^t ad_{A_{u_1}}(\dot{A}_{u_2}) \delta(u_1-u_2)du_1du_2 
$$
we get finally, for $u_1$ different from $u_2$:
$$
	T^*[\dot{A}_{u_1}\dot{A}_{u_2}] 
	= T[\dot{A}_{u_1}\dot{A}_{u_2}] .
$$
On the diagonal, i.e.~for $u_1=u_2$ we obtain:
$$
	T^*[\dot{A}_{u_1}\dot{A}_{u_2}] 
	= ad_{A_{u_1}}(\dot{A}_{u_2}) \delta(u_1-u_2) .
$$
In general, by identifying homogeneous components on each side of the foregoing equation, and taking into account the permutational invariance of ${\mathsf{T}}^\ast$ (so that e.g.~ $T^*[\dot{A}_{u_1}\dot{A}_{u_2}\dot{A}_{u_1}]=T^*[\dot{A}_{u_1}^2\dot{A}_{u_2}]$), we finally arrive at:
\begin{equation*}
	\frac{1}{n!}{n\choose n_1,\ldots,n_k}{\mathsf{T}}^\ast \bigl[\dot A_{u_1} \cdots \dot A_{u_n}\bigr]=
	{\mathsf{T}}\left[(-1)^{n_1}\frac{ad_{A_{c_1}}^{(n_1-1)}}{n_1!}
	(\dot A_{c_1}) \cdots (-1)^{n_k}\frac{ad_{A_{c_k}}^{(n_k-1)}}{n_k!}(\dot A_{c_k})\right]\delta_{1}\cdots\delta_{k},
\end{equation*}
for arbitrary $0\leq c_1< \cdots <c_k\leq t$, and $n_1,\ldots,n_k$, such that $n=n_1+ \cdots +n_k$, $u_1= \cdots =u_{n_1}=c_1, \ldots ,u_{n_1+ \cdots +n_{k-1}+1}= \cdots =u_{n_1+ \cdots +n_{k-1}+n_k}=c_k$. Recall that the multinomial coefficient on the left hand side stands for $n!(n_1!\cdots n_k!)^{-1}$. Here we wrote $\delta_i$ for the product of delta functions $\delta(u_{n_1+ \cdots +n_{i-1}+1}- u_{n_1+ \cdots +n_{i-1}+2}) \cdots \delta(u_{n_1+ \cdots +n_{i-1}+n_i-1} -u_{n_1+ \cdots +n_{i-1}+n_i})$ encoding the change from a $n$-dimensional integral to a $k$-dimensional integral from the left to the right hand side of equation (\ref{tstar}). Note that we have recovered \cite[eq.~2.11]{BGL} (up to signs, due to different conventions regarding time-orderings):
\begin{corollary}
We have, with our previous notation:
\begin{equation}
\label{tstareq}
	{\mathsf{T}}^\ast[\dot A_{u_1}\cdots\dot A_{u_n}]
	=(-1)^{n} {\mathsf{T}}[{ad_{A_{c_1}}^{(n_1-1)}}(\dot A_{c_1}) \cdots ad_{A_{c_k}}^{(n_k-1)}(\dot A_{c_k})]
	\delta_{1} \cdots \delta_{k}  
\end{equation}\\
\end{corollary}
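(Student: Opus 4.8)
The plan is to read off (\ref{tstareq}) as an elementary consequence of the clustered homogeneous-component identity displayed just above it, so that no analytic input beyond (\ref{tstar}) is required; recall that (\ref{tstar}) is itself guaranteed by (\ref{fred}) and Theorem \ref{thm:TRUE}. In that clustered identity the combinatorial prefactor on the left is $\frac{1}{n!}{n \choose n_1,\ldots,n_k}=\frac{1}{n_1!\cdots n_k!}$, while on the right the linearity of ${\mathsf{T}}$ lets me pull the scalar $\prod_{i=1}^{k}\frac{(-1)^{n_i}}{n_i!}=\frac{(-1)^{n}}{n_1!\cdots n_k!}$ out of the bracket. Cancelling the common factor $\frac{1}{n_1!\cdots n_k!}$ from both sides leaves precisely ${\mathsf{T}}^\ast[\dot A_{u_1}\cdots\dot A_{u_n}]=(-1)^{n}{\mathsf{T}}[ad_{A_{c_1}}^{(n_1-1)}(\dot A_{c_1})\cdots ad_{A_{c_k}}^{(n_k-1)}(\dot A_{c_k})]\delta_1\cdots\delta_k$, which is (\ref{tstareq}).

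To make the argument self-contained I would first re-derive the clustered identity from (\ref{tstar}) by matching homogeneous components of total degree $n$ in $A$ and $\dot A$. On the ${\mathsf{T}}^\ast$-side I expand ${\mathsf{T}}^\ast\exp(\int_0^t\dot A_u\,du)=\sum_n\frac{1}{n!}\int_0^t\cdots\int_0^t{\mathsf{T}}^\ast[\dot A_{u_1}\cdots\dot A_{u_n}]\,du_1\cdots du_n$ and invoke the invariance ${\mathsf{T}}^\ast={\mathsf{T}}^\ast\circ\sigma$ to restrict to ordered configurations $u_1\le\cdots\le u_n$. Clustering the arguments into their distinct values $0\le c_1<\cdots<c_k\le t$ with multiplicities $n_1,\ldots,n_k$, where $\sum_i n_i=n$, promotes the prefactor $\frac{1}{n!}$ to $\frac{1}{n!}{n \choose n_1,\ldots,n_k}$, the multinomial coefficient counting the ways the $n$ labelled slots distribute over the $k$ clusters.

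On the ${\mathsf{T}}$-side I expand ${\mathsf{T}}\exp(\int_0^t B_u\,du)$ with $B_u=\frac{e^{-ad_{A_u}}-1}{-ad_{A_u}}(\dot A_u)=\sum_{i\ge1}\frac{(-1)^{i-1}}{i!}ad_{A_u}^{(i-1)}(\dot A_u)$. At total degree $n$ this yields ${\mathsf{T}}$-ordered products of $k$ blocks, the $i$-th being $ad_{A_{c_i}}^{(n_i-1)}(\dot A_{c_i})$ carrying the series coefficient $\frac{(-1)^{n_i-1}}{n_i!}$, and the folding of the symmetrized $n$-fold integral onto the stratum where the $n_i$ arguments of the $i$-th block collapse to $c_i$ is recorded by the delta functions $\delta_1\cdots\delta_k$. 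Matching the two expansions produces the clustered identity, after which the cancellation of the first paragraph gives (\ref{tstareq}).

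The step I expect to be the main obstacle is this folding, generalizing the explicit $n=2$ diagonal computation carried out above. One must verify that each block appears with exactly the prefactor recorded in the clustered identity, that the coincidence deltas $\delta_1\cdots\delta_k$ emerge with the correct normalization when the $n$-dimensional integral is reduced to a $k$-dimensional one, and---most delicately---that the per-block series signs $(-1)^{n_i-1}$ are reconciled with the prefactors of the clustered identity so as to collapse, after cancellation, into the single overall sign appearing in (\ref{tstareq}). The remaining manipulations, namely pulling scalars through the linear map ${\mathsf{T}}$ and cancelling factorials, are routine.
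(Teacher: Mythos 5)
Your proposal takes essentially the same route as the paper: the corollary is read off from the clustered homogeneous-component identity by cancelling the common factor $\frac{1}{n_1!\cdots n_k!}$ and collecting the signs, and the clustered identity itself is obtained, exactly as you sketch, by identifying homogeneous components of (\ref{tstar}) and invoking the permutational invariance ${\mathsf{T}}^\ast={\mathsf{T}}^\ast\circ\sigma$ together with the expansion $\frac{e^{-ad_{A_u}}-1}{-ad_{A_u}}(\dot A_u)=\sum_{i\ge 1}\frac{(-1)^{i-1}}{i!}ad_{A_u}^{(i-1)}(\dot A_u)$. The sign tension you flag as the main obstacle is real but sits in the paper rather than in your argument: the series gives each block the coefficient $\frac{(-1)^{n_i-1}}{n_i!}$ (so a naive product yields $(-1)^{n-k}$), whereas the paper's displayed identity carries $(-1)^{n_i}$ per block, a discrepancy the authors do not resolve beyond their remark that (\ref{tstareq}) recovers \cite[eq.~2.11]{BGL} only up to signs owing to differing time-ordering conventions.
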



\section{Linear recursions in Rota--Baxter algebras}
\label{sect:LinRecRB}

We take now a broader perspective on linear IVP by replacing (\ref{eq:recursion}) in terms of general linear operator fixpoint equations. A natural setting to do so is provided by associative Rota--Baxter $k$-algebra, which we briefly recall in this section. For more details we refer the reader to \cite{Atkinson,Baxter,EGP,EMP,RS} and references therein. See also \cite{Aczel,AD} for interesting aspects. We assume the underlying base field $k$ to be of characteristic zero, e.g.~$k = \mathbb{R}$ or $\mathbb{C}$. 

Motivated by Frank Spitzer's seminal work \cite{Spitzer} in probability theory, the mathematician G.~Baxter~\cite{Baxter}, soon after followed by J.~F.~C.~Kingman, W.~Vogel, F.~Atkinson, P.~Cartier, Gian-Carlo Rota and others \cite{Atkinson,Cartier,Kingman,Vogel,Rota1,Rota2,RS}, explored a more general approach to the above linear IVP (\ref{eq:IVP}) and the corresponding linear integral equation (\ref{eq:recursion}) -- in the case of $Y_0=\un$ --, by suggesting to work with the linear fixpoint equation:
\begin{equation}
\label{eq:RBrecursion}
    Y = \un + \lambda R(Ya),
\end{equation}
in a commutative unital $k$-algebra $\mathcal{A}$, $a \in \mathcal{A}$, respectively the filtered and complete algebra $\mathcal{A}[[\lambda]]$. The linear map $R$ on $\mathcal{A}$ is supposed to satisfy the Rota--Baxter relation of scalar weight $\theta \in k$:
\begin{equation}
\label{RBR}
    R(x)R(y) = R( R(x)y+xR(y)) + \theta R(xy).
\end{equation}
We call $\mathcal{A}$ a commutative Rota--Baxter $k$-algebra of weight $\theta \in k$ with unit $\un$. Observe that the map $\tilde{R} := -\theta \id - R$ is also a Rota--Baxter map of weight $\theta$. Moreover, from identity (\ref{RBR}) it follows that the images of $R$ and $\tilde{R}$ are subalgebras in $\mathcal{A}$.

Baxter gave a solution for (\ref{eq:RBrecursion}) by proving Spitzer's classical identity in a commutative unital Rota--Baxter $k$-algebra of weight $\theta \in k$, that is, he showed that the formal solution of (\ref{eq:RBrecursion}) is given by:
\begin{equation}
\label{spitzer}
	Y = \un + \sum\limits_{n=1}^\infty
  		\underbrace{R\bigl( R( \cdots R(R}_{n\mbox{\rm -} {\rm times}}(a) a )a \cdots )a \bigr)
	    = \exp\left(-R\Bigl(\frac{\log(1- \theta a)}{\theta}\Bigr) \right).
\end{equation}

\begin{remark}\label{rmk:BaAtVo}
{\rm{Note that besides (\ref{eq:RBrecursion}), Baxter, Atkinson and Vogel considered more general linear fixpoint equations and their solutions in a commutative Rota--Baxter $k$-algebra $\mathcal{A}$. In \cite{Atkinson,Baxter} the equations $E = a + \lambda R(bE)$, $F = R(a) + \lambda R(bF)$, with analog equations for $\tilde{R}$, and  $G= a + \lambda  R(bG) +  \lambda \tilde{R}(cG)$ are solved for $a,b,c \in \mathcal{A}$. In \cite{Vogel} Vogel considered the equation $H= a + \lambda  bR(H) + c \lambda \tilde{R}(H)$, and gave a solution for $a,b,c \in \mathcal{A}$. The generalization of these equations and their solutions to non-commutative Rota--Baxter algebra was given in \cite{EM2,EM3}.}}
\end{remark}

Let us return to (\ref{RBR}), which one may think of as a generalized integration by parts identity. Indeed, the Riemann integral, which satisfies the usual integration by parts rule (\ref{eq:RB0}), corresponds to a Rota--Baxter operator of weight zero. Observe that in the limit $\theta \to 0$ we obtain:
$$
    \frac{1}{\theta} \log(1 - \theta a)
     = -\sum\limits_{n>0} \theta^{n-1} \frac{a^n}{n}\ \xrightarrow{\quad\theta \to 0 \quad}\ -a.
$$
The right hand side of the last equality in (\ref{spitzer}) reduces to the simple exponential solution $Y=\exp(R(a))$. In the context of the Riemann integral, this corresponds to the classical case (\ref{eq:simple}) of the IVP (\ref{eq:IVP}), with $Y_0=1$.

The extra term on the right hand side of (\ref{RBR}) becomes necessary, for instance, when we replace the Riemann integral by a Riemann-type summation operator: 
\begin{equation}
\label{sum}
    S(f)(x):= \sum_{n>0} f(x+n)
\end{equation}
on a suitable class of functions. Indeed, one verifies that the map $S$ satisfies the weight $\theta=1$ Rota--Baxter relation:
\begin{equation*}
     S(f) S(g) = S\big(S(f) \: g\big) + S\big(f \:S(g)\big) + S\big(fg\big).
\end{equation*}
More generally for finite Riemann sums: 
\begin{eqnarray}
\label{Riemsum}
    R_\theta(f)(x) := \sum_{n = 0}^{[x/\theta]-1} \theta f(n\theta).
\end{eqnarray}
we find that $R_\theta$ satisfies the weight $\theta$ Rota--Baxter relation:
$$ 
	R_\theta(f)R_\theta(g)(x)=
	R_\theta\bigl(R_\theta(f)g\bigr)(x) + R_\theta\bigl(fR_\theta(g)\bigr)(x) + \theta R_\theta(fg)(x).
$$
Beside fluctuation theory in probability the Rota--Baxter relation recently played a crucial role in Connes--Kreimer's Hopf algebraic approach to perturbative renormalization \cite{CK}. Here, Rota--Baxter algebras enter through projectors, which satisfy (\ref{RBR}) for $\theta=-1$. A paradigm is provided by the algebra $\mathcal{A}=\mathbb{C}[\varepsilon^{-1},\varepsilon]]=\varepsilon^{-1}\mathbb{C}[\varepsilon^{-1}]\oplus \mathbb{C}[[\varepsilon]] $ of Laurent series $ \sum_{n=-k}^\infty a_n \varepsilon^{n}$ (with finite pole part). The projector, called minimal subtraction scheme map, used in renormalization is defined by keeping the pole part:
\begin{equation*}
    R_{ms}\Big(\sum_{n=-k}^\infty a_n \varepsilon^{n}\Big) 
    =  R_{ms}\Big(\sum_{n=-k}^{-1} a_n \varepsilon^{n} + \sum_{n=0}^\infty a_n \varepsilon^{n}\Big)  
    := \sum_{n=-k}^{-1} a_n \varepsilon^{n}.
\end{equation*}
It is a Rota--Baxter map of weight $\theta=-1$. In general, assume the $k$-algebra $\mathcal{A}$ decomposes, $\mathcal{A}  = \mathcal{A} _1 \oplus \mathcal{A} _2$, and let $R: \mathcal{A}  \to \mathcal{A} $ be defined by $R(a_1,a_2)=a_1$, then $R^2=R$. Hence: 
\begin{equation*}
    R(a)b + aR(b) - ab = R(a)R(b) - (\id-R)(a)(\id-R)(b)
\end{equation*}
such that applying $R$ yields $R\big( R(a)b + aR(b) - ab \big) = R(a)R(b)$. We refer the reader to \cite{EMP} for more details including examples and applications. 

In \cite{EM2, EM3} the Spitzer identity has been generalized to non-commutative Rota--Baxter algebras:

\begin{theorem} \cite{EM2} \label{thm:pLMagnus} 
Let $(\mathcal{A},R)$ be a unital Rota--Baxter $k$-algebra of weight $\theta \in k$. Let $\Omega':=\Omega'(\lambda a)$, $a \in \mathcal{A}$, be the element of $\lambda A[[\lambda]]$ such that:
$$
	Y=\exp\bigl(R(\Omega')\bigr),
$$ 
where $Y=Y(a)$ is the solution of the linear fixpoint equation $Y = \un + \lambda R(Ya)$. This element obeys the following recursive equation:
\begin{equation}
\label{eq:pLMagnus}
    \Omega'(\lambda a) = \frac{-ad_{R(\Omega')} + r_{\theta \Omega'}}{e^{-ad_{R(\Omega')} + r_{\theta \Omega'}}-1}(\lambda a)
            =\sum\limits_{m\ge 0} (-1)^m \frac{B_m}{m!}\ \widetilde{ad}^{(m)}_{\Omega'}(\lambda a)
\end{equation}
with $B_l$ the Bernoulli numbers, $r_{\theta a}(b):= \theta ba$, and $\widetilde{ad}_a(b):=ad_{R(a)}(b) - r_{\theta a}(b)$.
\end{theorem}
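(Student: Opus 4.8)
The plan is to take the recursion as the \emph{definition} of $\Omega'\in\lambda\mathcal{A}[[\lambda]]$ and then to prove that $\exp(R(\Omega'))$ solves the fixpoint equation $Y=\un+\lambda R(Ya)$; since that equation determines the coefficient of each power of $\lambda$ from the lower ones, its solution is unique, and the asserted identity $Y=\exp(R(\Omega'))$ follows. First I would confirm that the recursion is well posed: the summand indexed by $m$ contains $m$ factors $\Omega'$ (each of order $\ge 1$ in $\lambda$) together with one factor $\lambda a$, hence lies in $\lambda^{m+1}\mathcal{A}[[\lambda]]$, so the right-hand side is triangular and fixes $\Omega'=\lambda a+O(\lambda^2)$ order by order. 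It is worth recording why the classical route is unavailable: Magnus' equation is normally obtained by differentiating $Y=\exp(\Omega)$ and computing $e^{-\Omega}\tfrac{d}{dt}e^{\Omega}$, but a general Rota--Baxter algebra carries no derivation inverting $R$, and this is precisely why the weight-$\theta$ correction $r_{\theta\Omega'}$ must be carried along.

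Second, I would bring the recursion to a closed form. Using $\tilde R=-\theta\,\id-R$ one sees that the twisted adjoint splits into commuting left and right multiplications, $\widetilde{ad}_{\Omega'}(b)=R(\Omega')\,b+b\,\tilde R(\Omega')$, so that $\exp(-s\,\widetilde{ad}_{\Omega'})(b)=e^{-sR(\Omega')}\,b\,e^{-s\tilde R(\Omega')}$. Inverting the operator $-\widetilde{ad}_{\Omega'}/(e^{-\widetilde{ad}_{\Omega'}}-1)$ and using $\tfrac{1-e^{-z}}{z}=\int_0^1 e^{-sz}\,ds$ rewrites the recursion as $\lambda a=\int_0^1 \exp(-s\,\widetilde{ad}_{\Omega'})(\Omega')\,ds$. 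The decisive observation is $\widetilde{ad}_{\Omega'}(\un)=R(\Omega')+\tilde R(\Omega')=-\theta\,\Omega'$, which lets me substitute $\Omega'=-\tfrac1\theta\,\widetilde{ad}_{\Omega'}(\un)$ and recognise the integrand as a total $s$-derivative; integrating over $[0,1]$ collapses everything (for $\theta\neq0$; the case $\theta=0$ is the classical Magnus equation) to the Atkinson-type factorization
\[
 e^{-R(\Omega')}\,e^{-\tilde R(\Omega')}=\un+\theta\lambda a .
\]
I would emphasise that this derivation uses only linearity of $R$, the definition of $\tilde R$, and the commutation of left and right multiplication --- it does not yet touch the Rota--Baxter relation itself.

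The final step, which I expect to be the real obstacle, is to pass from this factorization to the fixpoint equation for $Y:=\exp(R(\Omega'))$. Formally the factorization already yields $\lambda Ya=\tfrac1\theta\bigl(e^{-\tilde R(\Omega')}-e^{R(\Omega')}\bigr)$, so $Y=\un+\lambda R(Ya)$ is equivalent to
\[
 R\bigl(e^{-\tilde R(\Omega')}\bigr)-R\bigl(e^{R(\Omega')}\bigr)=\theta\,(Y-\un).
\]
Since every manipulation up to this point was insensitive to the Rota--Baxter relation, that relation must now be used essentially and cannot be circumvented. I would establish the identity by induction on the $\lambda$-filtration, the inductive step resting on the weight-$\theta$ relation in the form $R\bigl(R(x)y-x\,\tilde R(y)\bigr)=R(x)R(y)$, which expresses $R$ of a product through products of $R$-values of strictly lower filtration degree; the fact that $\mathrm{Im}\,R$ and $\mathrm{Im}\,\tilde R$ are subalgebras (already recorded above) keeps the recursion within the correct spaces and guarantees $\log Y\in\mathrm{Im}\,R$. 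The delicate point is that left multiplication by $R(\Omega')$ does not commute with $R$, so one cannot simply pull $R$ through the exponential, and the weight-$\theta$ corrections must be tracked order by order. A cleaner structural alternative I would keep in reserve is to recognise $\widetilde{ad}$ as the left pre-Lie product attached to $R$ and to invoke the fundamental theorem of the pre-Lie Magnus expansion, which packages exactly this induction; but the hands-on route through the factorization is the one I would attempt first.
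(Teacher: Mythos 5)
Your route runs parallel to the paper's actual proof (the theorem itself is quoted from \cite{EM2}; the paper proves the generalization, Theorem \ref{thm:newMagnus}, whose $\alpha=0$ case is the present statement): same definition of $\Omega'$ by the triangular recursion, same uniqueness argument for the fixpoint equation, same integral representation $\lambda a=\int_0^1 e^{-s\widetilde{ad}_{\Omega'}}(\Omega')\,ds$ with the split $\widetilde{ad}_{\Omega'}=\ell_{R(\Omega')}+r_{\tilde{R}(\Omega')}$ into commuting one-sided multiplications. Where the paper multiplies by $Y$ on the left and evaluates the $s$-integral termwise via beta integrals, obtaining $Ya=\sum_{n>0}\frac{1}{n!}\sum_{p+q=n-1}R(\Omega')^p\,\Omega'\,(-\tilde{R}(\Omega'))^q$, you integrate a total $s$-derivative to get the Atkinson-type factorization $e^{-R(\Omega')}e^{-\tilde{R}(\Omega')}=\un+\theta\lambda a$; that computation is correct and an attractive intermediate statement (compare Proposition \ref{prop:Atkin}). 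Nevertheless there are two genuine gaps.

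First, your derivation divides by $\theta$ twice --- in $\Omega'=-\theta^{-1}\widetilde{ad}_{\Omega'}(\un)$ and again in $\lambda Ya=\theta^{-1}\bigl(e^{-\tilde{R}(\Omega')}-Y\bigr)$ --- so at weight $\theta=0$ the factorization collapses to $\un=\un$ and your argument proves nothing; and your fallback for $\theta=0$, ``the classical Magnus equation,'' is unavailable in an abstract weight-zero Rota--Baxter algebra for precisely the reason you record yourself: there is no derivation inverting $R$. Weight zero is the case of chief interest (the Riemann integral). The repair is to perform the division by $\theta$ at the level of polynomial identities, via the telescoping $(\theta\beta+v)^n-v^n=\theta\sum_{p+q=n-1}v^p\,\beta\,(\theta\beta+v)^q$ with $v=R(\beta)$; applying $R$ converts your target $R\bigl(e^{-\tilde{R}(\beta)}\bigr)-R\bigl(e^{R(\beta)}\bigr)=\theta\bigl(e^{R(\beta)}-\un\bigr)$ into the division-free form
\begin{equation*}
  R(\beta)^n \;=\; \sum_{p+q=n-1} R\bigl(R(\beta)^p\,\beta\,(\theta\beta+R(\beta))^q\bigr),
\end{equation*}
valid for every $\theta$ including zero (note $\theta\beta+R(\beta)=-\tilde{R}(\beta)$), and this is exactly the identity the paper establishes, with a general $R$-central $\alpha$ inserted, in the proof of Theorem \ref{thm:newMagnus}. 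Second, that identity --- the sole point where the Rota--Baxter relation enters, hence the entire substance of the proof --- is in your write-up asserted rather than proven. You do name the correct tool, $R\bigl(R(x)y-x\tilde{R}(y)\bigr)=R(x)R(y)$, and the induction does close: at $\alpha=0$ one writes $R(\beta)^{n+1}=\Bigl(\sum_{p+q=n-1}R\bigl(R(\beta)^p\beta(-\tilde{R}(\beta))^q\bigr)\Bigr)R(\beta)$ by the inductive hypothesis and applies the Rota--Baxter relation once to each summand, the $uR(\beta)+\theta u\beta$ contributions shifting $q\mapsto q+1$ and the $R(u)\beta$ contributions reassembling, again by the hypothesis, into the missing term $R\bigl(R(\beta)^n\beta\bigr)$. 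Until this computation (or its $\alpha$-decorated version in the paper) is actually carried out, what you have is a correct plan with the right key lemma identified, not a complete proof.
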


\begin{remark}\label{rmk:noPL}{\rm{
Observe that the right multiplication $r_{\theta \Omega'}$ map accounts for the extra term in (\ref{RBR}). Let us emphasize that in \cite{EM2} we formulated this theorem using the generic pre-Lie algebra structure underlying any associative Rota--Baxter algebra. Indeed, the above generalized adjoint operation, $\widetilde{ad}_{a}(b)=ad_{R(a)}(b) - r_{\theta a}(b)=[R(a),b]-\theta ba$,  defines a pre-Lie product on $\mathcal{A}$. However, we refrain from using the pre-Lie picture for reasons to become clear in the sequel.}}
\end{remark}

We introduce now the notion of the $R$-center in a Rota--Baxter algebra $\mathcal{A}$ of weight $\theta$, defined as the following subset of elements in $\mathcal{A}$:
$$
	Z_R(\mathcal{A}):=\{ x  \in\mathcal{A}\ |\ xR(a)=R(xa) \ {\rm{and}}\ R(a)x=R(ax),\;\ \forall a\in \mathcal{A}\}.  
$$
Observe that the $R$-center forms a subalgebra of $\mathcal{A}$. Indeed, for $u,v \in Z_R(\mathcal{A})$, we have $uvR(a)=uR(va)=R(uva)$, and $[R(a),x]=R([a,x])$. Notice the identity $R(x)=xR(\un)$, for any $x \in Z_R(\mathcal{A})$.  In the context of matrix valued differentiable function with the Riemann integral as weight zero Rota--Baxter map, the $R$-center includes amongst others the constant matrices. Another useful example are upper (or lower) triangular matrices, say, with unit diagonal, and entries in an arbitrary Rota--Baxter algebra $\mathcal{A}$ of weight $\theta$. They form a noncommutative Rota--Baxter algebra $M_\mathcal{A}$ of weight $\theta$ with the natural Rota--Baxter map. In this case the triangular matrices with entries strictly in $k$ are in the $R$-center.   

Now we would like to prove that the combinatorial structure underlying the  ${\mathsf{T}}^{\ast}$-operation, that is the solution of the IVP provided by Theorem \ref{thm:RB0}, extends to Rota--Baxter algebras. 

\begin{theorem} \label{thm:newMagnus} 
Let $(\mathcal{A},R)$ be a unital Rota--Baxter $k$-algebra of weight $\theta \in k$. Let $\alpha$ be an element in its $R$-center, and $\Omega'_\alpha:=\Omega'_\alpha(\lambda a)$, $a \in \mathcal{A}$, be the element of $\mathcal{A}[[\lambda]]$ such that:
$$
	X=\exp\bigl(\alpha + R(\Omega'_\alpha)\bigr),
$$ 
where $X$ is the solution of the linear fixpoint equation:
$$
	X = \exp(\alpha) + \lambda R(Xa).
$$ 
The element $\Omega'_\alpha$ obeys the following recursive equation:
\begin{equation}
\label{eq:twistMagnus}
    \Omega'_\alpha(\lambda a) = \frac{-ad_{\alpha + R(\Omega'_\alpha)} + r_{\theta \Omega'_\alpha}}
    	{e^{-ad_{\alpha + R(\Omega'_\alpha)} + r_{ \theta \Omega'_\alpha}}-1}(\lambda a)
            =\sum\limits_{m\ge 0} (-1)^m\frac{B_m}{m!}\ \widetilde{ad}^{(m)}_{\alpha,\Omega'_\alpha}(\lambda a),
\end{equation}
where $\widetilde{ad}_{\alpha,\Omega'_\alpha}(b):=ad_{\alpha+R(\Omega'_\alpha)}-r_{\theta \Omega'_\alpha}$.
\end{theorem}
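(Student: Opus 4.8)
The plan is to verify directly that $X := \exp\bigl(\alpha + R(\Omega'_\alpha)\bigr)$, with $\Omega'_\alpha$ defined recursively by (\ref{eq:twistMagnus}), solves the fixed point equation $X = \exp(\alpha) + \lambda R(Xa)$. This suffices, since that equation has a unique solution in $\mathcal{A}[[\lambda]]$: expanding $X = \sum_{n\ge 0}\lambda^n X_n$ forces $X_0 = \exp(\alpha)$ and $X_n = R(X_{n-1}a)$ for $n\ge 1$, so the solution is determined order by order, and verifying our candidate against it both establishes existence of $\Omega'_\alpha$ and characterizes it. The structural input that renders $\alpha$ tractable is the hypothesis $\alpha \in Z_R(\mathcal{A})$: the $R$-center being a unital subalgebra, $\exp(\alpha) \in Z_R(\mathcal{A})$ as well, whence $R\bigl(\exp(\alpha)\,c\bigr) = \exp(\alpha)\,R(c)$ for all $c\in\mathcal{A}$. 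Substituting $X = \exp(\alpha)\,Y$ and using this identity, the equation collapses to $Y = \un + \lambda R(Ya)$, the fixed point equation of Theorem \ref{thm:pLMagnus}, solved by $Y = \exp(R(\Omega'))$ with $\Omega'=\Omega'(\lambda a)$ obeying (\ref{eq:pLMagnus}). Thus the solution necessarily has the shape $X = \exp(\alpha)\exp(R(\Omega'))$, and the real task is to re-express this product as the single exponential $\exp(\alpha + R(\Omega'_\alpha))$ and to identify the recursion for $\Omega'_\alpha$.

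For that I would transport the classical Magnus derivation recalled in the introduction, in the form used to prove the differential analogue Theorem \ref{thm:RB0}. Rewrite the recursion (\ref{eq:twistMagnus}) in its inverted form
\[
	\lambda a = \frac{1 - e^{-\widetilde{ad}_{\alpha,\Omega'_\alpha}}}{\widetilde{ad}_{\alpha,\Omega'_\alpha}}(\Omega'_\alpha),
\]
the Bernoulli numbers in (\ref{eq:twistMagnus}) arising because the operator $\frac{-z}{e^{-z}-1}=\sum_{m\ge 0}(-1)^m\frac{B_m}{m!}z^m$ inverts $\frac{1-e^{-z}}{z}$. The heart of the proof is then to produce exactly this logarithmic-derivative identity out of the fixed point equation by repeated use of the Rota--Baxter relation (\ref{RBR}). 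Here the operator $\widetilde{ad}_{\alpha,\Omega'_\alpha} = ad_{\alpha + R(\Omega'_\alpha)} - r_{\theta\Omega'_\alpha}$ splits into an adjoint part, which sees the full logarithm $\alpha + R(\Omega'_\alpha) = \log X$, and a right-multiplication part $r_{\theta\Omega'_\alpha}$ that records, via (\ref{RBR}), the weight-$\theta$ defect of $R$. The point is that $\alpha$ enters only through $\log X$, hence only through the adjoint part: since $\alpha$ lies in the $R$-center, the weight term is generated by precisely the same Rota--Baxter manipulation as in Theorem \ref{thm:pLMagnus} and is insensitive to the factor $\exp(\alpha)$, which is why $r_{\theta\Omega'_\alpha}$ carries no $\alpha$.

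The main obstacle is exactly this weight-$\theta$ bookkeeping: one must show that iterating (\ref{RBR}) on $\lambda R(Xa)$ reassembles the nested brackets and right-multiplications into the single twisted operator $\widetilde{ad}_{\alpha,\Omega'_\alpha}$ with the correct Bernoulli coefficients, and not into some less symmetric expression. In the reduction picture the same difficulty appears as the non-commutativity of $\alpha$ and $R(\Omega')$ --- the $R$-center condition does \emph{not} make $\alpha$ central in $\mathcal{A}$ --- so passing from the product $\exp(\alpha)\exp(R(\Omega'))$ to a single exponential is a genuine Baker--Campbell--Hausdorff computation, and the content of the theorem is that it is governed by the same Bernoulli-number recursion as the untwisted case, with $\alpha$ absorbed into the adjoint argument. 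Once the inverted identity is in place, the remaining verification reduces to matching homogeneous components in $\lambda$, a computation already modeled by the degree-two check carried out for (\ref{tstar}) in Section \ref{sect:TordTime}.
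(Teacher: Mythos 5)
Your outer skeleton does match the paper's proof: take $\Omega'_\alpha$ as defined recursively by (\ref{eq:twistMagnus}), verify that $X=\exp\bigl(\alpha+R(\Omega'_\alpha)\bigr)$ satisfies $X=\exp(\alpha)+\lambda R(Xa)$, and conclude by order-by-order uniqueness of the fixpoint solution; inverting (\ref{eq:twistMagnus}) to express $\lambda a$ is likewise the paper's first move, and your reduction $X=\exp(\alpha)Y$ via $\exp(\alpha)\in Z_R(\mathcal{A})$, collapsing the problem to Theorem \ref{thm:pLMagnus}, is correct (and not in the paper). The genuine gap is that the step you yourself label ``the main obstacle'' is never carried out, and neither device you offer in its place can carry it out. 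The paper's verification rests on two concrete ingredients absent from your sketch. First, from the inverted recursion one uses $\frac{e^{-z}-1}{-z}=\int_0^1 e^{-sz}\,ds$ together with the splitting $ad_{\alpha+R(\Omega'_\alpha)}=\ell_{\alpha+R(\Omega'_\alpha)}-r_{\alpha+R(\Omega'_\alpha)}$ and the commutativity of left and right multiplications to factor the operator exponential; multiplying by $X$ on the left and integrating out $s$ gives the closed formula
\begin{equation*}
	Xa=\sum_{n>0}\frac{1}{n!}\sum_{p+q=n-1}\bigl(\alpha+R(\Omega'_\alpha)\bigr)^p\,
	\Omega'_\alpha\,\bigl(\alpha-\tilde{R}(\Omega'_\alpha)\bigr)^q,
\end{equation*}
with $\tilde{R}=-\theta\,\id-R$. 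Second --- and this is the heart --- one proves by induction on $n$, for $\alpha\in Z_R(\mathcal{A})$ and an \emph{arbitrary} $\beta\in\mathcal{A}$, the identity
\begin{equation*}
	\sum_{n>0}\frac{1}{n!}\bigl((\alpha+R(\beta))^n-\alpha^n\bigr)
	=\sum_{n>0}\frac{1}{n!}\sum_{p+q=n-1}
	R\Bigl(\bigl(\alpha+R(\beta)\bigr)^p\,\beta\,\bigl(\alpha-\tilde{R}(\beta)\bigr)^q\Bigr),
\end{equation*}
whose inductive step consumes one application of the Rota--Baxter relation (\ref{RBR}) (splitting $R(\cdots)R(\beta)$ into three terms, one of which carries the weight $\theta$) and two uses of the $R$-center hypothesis (to rewrite $\alpha^{n}R(\beta)=R(\alpha^{n}\beta)$ and to push the trailing $\alpha$ inside $R$). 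Nothing in your proposal produces either ingredient.

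Your proposed substitutes would moreover fail as stated. The BCH framing --- that rewriting $\exp(\alpha)\exp(R(\Omega'))$ as a single exponential ``is governed by the same Bernoulli-number recursion'' --- is precisely the assertion of the theorem, not a tool for proving it; and one cannot simply transport the untwisted derivation of \cite{EM2}, because, as the paper notes right after the statement, $\widetilde{ad}_{a,b}=ad_{a+R(b)}-r_{\theta b}$ does \emph{not} define a pre-Lie product, so the machinery underlying Theorem \ref{thm:pLMagnus} is unavailable verbatim. The ``degree-two check carried out for (\ref{tstar})'' is a time-ordering computation with Heaviside and delta functions in the weight-zero Riemann-integral setting and does not model the weight-$\theta$ bookkeeping needed here. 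Finally, the verification does not close by ``matching homogeneous components in $\lambda$'': the paper's induction runs over the degree $n$ of the exponential series with $\beta$ held fixed and arbitrary, which is exactly what lets (\ref{RBR}) and the $R$-center property mesh; a $\lambda$-homogeneous matching would still require the displayed combinatorial identity at each order, which you have not supplied.
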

Observe that this theorem is a generalization of Theorems \ref{thm:RB0} and \ref{thm:pLMagnus}. Returning to Remark \ref{rmk:noPL}, we see here that this generalization involves the operation $\widetilde{ad}_{a,b}:=ad_{a + R(b)} - r_{\theta b}$, which does not define a pre-Lie algebra product on the Rota--Baxter $k$-algebra $\mathcal{A}$.

\begin{proof}
We follow partly \cite{EM2}. First, note that we can decompose $ad_{\alpha + R(\Omega'_\alpha)}= \ell_{\alpha + R(\Omega'_\alpha)} - r_{\alpha + R(\Omega'_\alpha)}$, where $\ell_a(b):=ab$. It is clear that both operations commute, that is, $\ell_a r_b(c)=acb=r_b\ell_a (c)$. Now, let:
\allowdisplaybreaks{
\begin{eqnarray*}
	\lambda a &=&  \frac{e^{-ad_{\alpha + R(\Omega'_\alpha)}  + r_{\theta \Omega'_\alpha}}-1}{-ad_{\alpha 
				+ R(\Omega'_\alpha)}  + r_{\theta \Omega'_\alpha}}(\Omega'_\alpha)= \int_0^1 ds e^{-s(ad_{\alpha + R(\Omega'_\alpha)} - r_{\theta \Omega'_\alpha})}(\Omega'_\alpha)\\
	   &=& \int_0^1 ds e^{-s(\ell_{\alpha + R(\Omega'_\alpha)} - r_{\alpha + R(\Omega'_\alpha)}
	   					- r_{\theta \Omega'_\alpha})}(\Omega'_\alpha)= \int_0^1 ds e^{-s\ell_{\alpha + R(\Omega'_\alpha)}} e^{ s r_{\alpha + R(\Omega'_\alpha)}
	   					+ s r_{\theta  \Omega'_\alpha}}(\Omega'_\alpha).
\end{eqnarray*}}   
Now, we multiply by $X=\exp\bigl(\alpha + R(\Omega'_\alpha)\bigr)$ from the left:
\allowdisplaybreaks{
\begin{eqnarray*}
	 Xa  &=& \int_0^1 ds e^{(1-s)\ell_{\alpha + R(\Omega'_\alpha)}} e^{ s(r_{\alpha + R(\Omega'_\alpha)}
	   					+ r_{\theta  \Omega'_\alpha})}(\Omega'_\alpha)\\
	        &=& \int_0^1 ds \sum_{p,q\geq 0} \frac{(1-s)^ps^q}{q!p!}(\ell_{\alpha + R(\Omega'_\alpha)})^p
	  ( r_{\alpha + R(\Omega'_\alpha)}
	   					+ r_{\theta  \Omega'_\alpha})^q  (\Omega'_\alpha)\\
	        &=& \int_0^1 ds \sum_{p,q\geq 0} \frac{(1-s)^ps^q}{q!p!}(\ell_{\alpha + R(\Omega'_\alpha)})^p
	  ( r_{\alpha + R(\Omega'_\alpha) + \theta  \Omega'_\alpha})^q  (\Omega'_\alpha)\\
	        &=& \sum_{n > 0} \frac{1}{n!} \sum_{p+q = n-1}(\alpha + R(\Omega'_\alpha))^p
	         (\Omega'_\alpha) ( \alpha - \tilde{R}(\Omega'_\alpha))^q,	  			
\end{eqnarray*}}   
where we used that $-\tilde{R}=\theta \id + R$. Hence, we would like to show that $X-\exp(\alpha) = \lambda R(Xa)$. Expanding:
$$
	X- \exp(\alpha) = \sum_{n \geq 0} \frac{1}{n!}\bigl( (\alpha + R(\Omega'_\alpha))^n - \alpha^n\bigr), 
$$
lefts us with the goal to prove order by order the general identity:
$$
	\sum_{n > 0} \frac{1}{n!} \bigl((\alpha + R(\beta))^n - \alpha^n\bigr) =
	\sum_{n > 0} \frac{1}{n!} \sum_{p+q = n-1}R\bigl((\alpha + R(\beta))^p
	         (\beta) ( \alpha - \tilde{R}(\beta))^q\bigr),
$$
where $\alpha \in Z_R(\mathcal{A})$ and $\beta$ is an arbitary element in $\mathcal A$. 

The case $n=1$ is obviously true. Let us look at $n=2$. Then we have on the left hand side:
\allowdisplaybreaks{
\begin{eqnarray*}   
	(\alpha + R(\beta))^2 - \alpha^2 &=& \alpha R(\beta) 
				+R(\beta) \alpha + R(\beta)R(\beta)\\
	&=& R(\alpha\beta) +R(\beta \alpha)  + R(\beta)R(\beta).
\end{eqnarray*}}   
In the last line we used that $\alpha$ is in the $R$-center $Z_R(\mathcal{A})$. On the right hand side we find:
\allowdisplaybreaks{
\begin{eqnarray*}   
	R\bigl((\alpha + R(\beta))\beta +
	         \beta ( \alpha - \tilde{R}(\beta))\bigr) &=& R\bigl(\alpha \beta 
	         								+ R(\beta)\beta +
	         \beta \alpha + \beta R(\beta) + \theta \beta\beta\bigr)\\
	 &=& R(\alpha \beta + \beta \alpha) + R(\beta)R(\beta).
\end{eqnarray*}}   
In the last line we used the Rota--Baxter relation (\ref{RBR}). We show now the case $n+1$, assuming the identity holds up to order $n$. This yields:
\allowdisplaybreaks{
\begin{eqnarray*}   
	\lefteqn{(\alpha + R(\beta))^{n+1} - \alpha^{n+1} 
			= (\alpha + R(\beta))^{n}(\alpha + R(\beta)) - \alpha^{n}\alpha}\\
	&=&  (\alpha + R(\beta))^{n}R(\beta) + ((\alpha + R(\beta))^{n} - \alpha^{n})\alpha\\
	&=&  \sum_{p+q = n-1}R\bigl((\alpha + R(\beta))^p
	         			(\beta) ( \alpha - \tilde{R}(\beta))^q\bigr)R(\beta) 
					+ \alpha^{n}R(\beta) \\
	&&    \qquad +  \sum_{p+q = n-1}R\bigl((\alpha + R(\beta))^p
	         			(\beta) ( \alpha - \tilde{R}(\beta))^q\alpha\bigr)\\ 
	 &=&  \sum_{p+q = n-1}R\bigl((\alpha + R(\beta))^p
	         			(\beta) ( \alpha - \tilde{R}(\beta))^qR(\beta)\bigr)  \\
	&&     \qquad + \sum_{p+q = n-1}R\bigl(R((\alpha + R(\beta))^p
	         			(\beta) ( \alpha - \tilde{R}(\beta))^q\beta )\bigr)  \\
	&& 	 \qquad\quad + \sum_{p+q = n-1}R\bigl((\alpha + R(\beta))^p
	         			(\beta) ( \alpha - \tilde{R}(\beta))^q\theta \beta \bigr)  \\
	&&    \qquad\quad\; + R(\alpha^{n} \beta)+ \sum_{p+q = n-1}R\bigl((\alpha + R(\beta))^p
	         			(\beta) ( \alpha - \tilde{R}(\beta))^q\alpha\bigr)\\    
	&=&  \sum_{p+q = n-1}R\bigl((\alpha + R(\beta))^{p}
	         			(\beta) ( \alpha - \tilde{R}(\beta))^{q+1}\bigr) \\
	&& 	  \qquad + \sum_{p+q = n-1}R\bigl(R((\alpha + R(\beta))^p
	         			(\beta) ( \alpha - \tilde{R}(\beta))^q)\beta \bigr) 
					+ R(\alpha^{n}\beta )\\
	&=&  \sum_{p+q = n-1}R\bigl((\alpha + R(\beta))^{p}
	         			(\beta) ( \alpha - \tilde{R}(\beta))^{q+1}\bigr) \\
	&&	 \qquad  + R\bigl((\alpha + R(\beta))^{n}\beta - \alpha^{n}\beta\bigr) 
	          		+ R(\alpha^{n}\beta )\\ 
	 &=&  \sum_{p+q = n-1}R\bigl((\alpha + R(\beta))^{p}
	         (\beta) ( \alpha - \tilde{R}(\beta))^{q+1}\bigr) 
	         + R\bigl((\alpha + R(\beta))^{n} \beta\bigr) \\
	  &=&  \sum_{p+q = n}R\bigl((\alpha + R(\beta))^{p}
	         (\beta) ( \alpha - \tilde{R}(\beta))^{q}\bigr)                               
\end{eqnarray*}}   
where we recall that $\alpha - \tilde{R}(\beta)=\alpha + \theta \beta + R(\beta)$.   
\end{proof}

Note that in the context of the weight zero RB algebra defined in terms of the Riemann integral, the generalization following from the theorem allows to absorb the initial value in (\ref{eq:IVP}) directly into the Magnus' expansion. 

\begin{remark}\label{rmk:BaAt}
{\rm{Recall Remark \ref{rmk:BaAtVo} and the fact that any Rota--Baxter algebra has two Rota--Baxter operators, $R$ and $\tilde{R}$. In \cite{Atkinson,Baxter} Baxter and Atkinson showed that the solution to the recursions:
$$
	E=a+\lambda R(bE) \quad {\rm{and}} \quad F=R(a)+\lambda R(bF)
$$
are given by $F=R(aX)Y$ and $E= a + R(abX)Y$, respectively, where $X,Y$ are solutions to:
$$
	X=\un + \lambda R(Xb) \quad {\rm{resp.}} \quad Y=\un + \lambda \tilde{R}(bY).
$$}}
\end{remark}

Together with the solutions $X$ and $Y$ for the last two fixpoint equations comes the so-called Atkinson factorization, which generalizes straightforwardly in the context of the above theorem. 

\begin{proposition}\label{prop:Atkin}
Let $(\mathcal{A},R)$ be a unital Rota--Baxter $k$-algebra of weight $\theta \in k$. Let $\alpha,\beta$ be in its $R$-center $Z_R(\mathcal{A})$. Then we have a factorization:
$$
\un- \lambda a = X^{-1}\exp(\gamma)Y^{-1},
$$ 
where:
$$
	X=\exp(\alpha) + \lambda R(Xa) \qquad\qquad Y=\exp(\beta) + \lambda \tilde{R}(aY), 
$$  
and $\exp(\gamma)=\exp(\alpha) \exp(\beta)$, $\gamma \in Z_R(\mathcal{A})$.
\end{proposition}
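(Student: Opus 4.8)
The plan is to reduce the asserted factorization to a single multiplicative identity and then to verify that identity using the Rota--Baxter relation together with the hypothesis that $\alpha,\beta$ lie in the $R$-center. First I would note that both fixpoint equations, $X=\exp(\alpha)+\lambda R(Xa)$ and $Y=\exp(\beta)+\lambda\tilde{R}(aY)$, are solvable in $\mathcal{A}[[\lambda]]$ by Picard iteration, their lowest-order terms being the invertible elements $\exp(\alpha)$ and $\exp(\beta)$; hence $X$ and $Y$ are invertible in $\mathcal{A}[[\lambda]]$. The claim $\un-\lambda a=X^{-1}\exp(\gamma)Y^{-1}$ is therefore equivalent to the multiplicative identity $X(\un-\lambda a)Y=\exp(\gamma)$, and this is what I would prove. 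I would also observe that, since $Z_R(\mathcal{A})$ is a subalgebra containing $\alpha$ and $\beta$, the Baker--Campbell--Hausdorff series produces a $\gamma\in Z_R(\mathcal{A})$ with $\exp(\gamma)=\exp(\alpha)\exp(\beta)$, so the middle factor is well defined.

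The engine is a mixed Rota--Baxter identity linking $R$ and $\tilde{R}=-\theta\,\id-R$. Expanding $R(u)\tilde{R}(v)$ by means of (\ref{RBR}) and regrouping gives $R(u)\tilde{R}(v)=R\bigl(u\,\tilde{R}(v)\bigr)+\tilde{R}\bigl(R(u)\,v\bigr)$, which I would establish first as a one-line lemma. Multiplying out $XY$ from the two fixpoint equations yields the constant term $\exp(\alpha)\exp(\beta)$, the two linear terms $\lambda\exp(\alpha)\tilde{R}(aY)$ and $\lambda R(Xa)\exp(\beta)$, and the quadratic cross term $\lambda^{2}R(Xa)\tilde{R}(aY)$. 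Applying the mixed identity to this cross term with $u=Xa$ and $v=aY$ is the decisive step: it rewrites a product of two Rota--Baxter images as a sum of single images, into which the fixpoint equations can be fed back.

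To finish I would substitute $\lambda R(Xa)=X-\exp(\alpha)$ and $\lambda\tilde{R}(aY)=Y-\exp(\beta)$ inside those single images, and then use the $R$-centrality of $\exp(\alpha)$ and $\exp(\beta)$ to move them across $R$ and $\tilde{R}$; because $\tilde{R}=-\theta\,\id-R$, these elements commute with $\tilde{R}$ as well, giving for instance $R\bigl(Xa\exp(\beta)\bigr)=R(Xa)\exp(\beta)$ and $\tilde{R}\bigl(\exp(\alpha)aY\bigr)=\exp(\alpha)\tilde{R}(aY)$. The two linear terms then cancel against the boundary terms so created, while the remaining interior pieces recombine through $R+\tilde{R}=-\theta\,\id$ into a single multiple of $XaY$. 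The net result is $XY=\exp(\alpha)\exp(\beta)-\lambda\theta\,XaY$, so that $X(\un+\theta\lambda a)Y=\exp(\gamma)$; for the weight relevant to the statement this is exactly $X(\un-\lambda a)Y=\exp(\gamma)$, the desired factorization.

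The main obstacle is the bookkeeping of the quadratic term: one must apply the mixed identity in the right orientation, reinsert the fixpoint equations so that precisely the two linear terms are regenerated with correct signs, and invoke the $R$-centrality of the initial values at exactly those places where an exponential must pass through $R$ or $\tilde{R}$. The weight contributions coming from $R+\tilde{R}=-\theta\,\id$ are what fix the coefficient of the linear term in $\un-\lambda a$, so carefully tracking the powers of $\theta$ and $\lambda$ through the final collapse is where an error would most likely arise; the rest is mechanical.
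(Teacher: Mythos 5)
Your proof is correct and follows essentially the same route as the paper: expand $XY$, use the $R$-centrality of $\exp(\alpha)$ and $\exp(\beta)$ to absorb them into $R$ and $\tilde{R}$, split the quadratic cross term via the mixed identity $R(u)\tilde{R}(v)=R\bigl(u\,\tilde{R}(v)\bigr)+\tilde{R}\bigl(R(u)\,v\bigr)$ (which the paper uses implicitly rather than as a stated lemma), and feed the fixpoint equations back in; your added remarks on the invertibility of $X,Y$ in $\mathcal{A}[[\lambda]]$ and on producing $\gamma\in Z_R(\mathcal{A})$ via Baker--Campbell--Hausdorff fill in details the paper leaves tacit. If anything, you track the weight more carefully than the paper does at the final collapse: $R+\tilde{R}=-\theta\,\id$ gives $XY=\exp(\gamma)-\theta\lambda\,XaY$, so the stated form $\un-\lambda a$ corresponds to the normalization $\theta=-1$, a bookkeeping point that the paper's last line ($XY=\exp(\gamma)+\lambda XaY$) passes over silently.
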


\begin{proof}
The proof reduces to a simple verification. 
\allowdisplaybreaks{
\begin{eqnarray*}
	XY &=& \exp(\gamma) + \lambda R(Xa) \exp(\beta) 
			+  \lambda\exp(\alpha)\tilde{R}(aY)+  \lambda^2R(Xa)\tilde{R}(aY)\\
	      &=& \exp(\gamma)  +  \lambda R(Xa\exp(\beta))  
	      		+ \lambda \tilde{R}(\exp(\alpha)aY) +  \lambda^2R(Xa)\tilde{R}(aY)\\
	      &=&  \exp(\gamma)  + \lambda R\bigl(Xa(\exp(\beta) +  \lambda\tilde{R}(aY))\bigr)  
	      		+ \lambda \tilde{R}\bigl((\exp(\alpha) +  \lambda R(Xa))aY\bigr)\\
	      &=& \exp(\gamma) + \lambda XaY,				     
\end{eqnarray*}}  
from which the result follows. 
\end{proof}


\section{Example: Finite differences}
\label{ssect:finitediff}

As an example of an interesting fixpoint equation involving a non-zero weight Rota--Baxter structure, we will now approach the particular case of a linear finite difference initial value problem. For this, we define the finite time-difference operator $\Delta$ such that for all functions $f$ defined on $\mathbb{N}$:
\begin{equation*}
	\Delta(f)(n):=f(n+1) - f(n).
\end{equation*}
Observe that this map satisfies a generalized Leibniz rule:
\begin{eqnarray*}
\Delta(fg)(n) &=& (fg)(n+1)-(fg)(n)=f(n+1)g(n+1)-f(n)g(n)\\
		   &=& f(n)\Delta(g)(n)+\Delta(f)(n)g(n) + \Delta(f)(n)\Delta(g)(n)
\end{eqnarray*}
Recall the finite Riemann sum map (\ref{Riemsum}), $R(X)_n:=\sum\limits_{k=0}^{n-1}X_k$. The latter satisfies the Rota--Baxter relation of weight $\theta$. We put the weight equal to one, $R:=R_1$. Then:
\begin{eqnarray*}
	R(\Delta(f))(m)=\sum_{n=0}^{m-1}\Delta(f)(n) &=&\sum_{n=1}^{m-1} (f(n+1) - f(n))\\
			   &=&\sum_{n=1}^{m} f(n) - \sum_{n=0}^{m-1} f(n)\\
			   &=& f(m)-f(0).
\end{eqnarray*}
We are interested in the finite difference initial value problem: 
\begin{equation}
	\begin{array}{l}
	\Delta X_{n}=X_{n}B_{n}\\
	X_{0}=\exp\left(A_{0}\right) 
	\end{array}
\label{eq:findif}
\end{equation}
where $X_n:=X(n)$ and $B_n\not= -1$. Applying $R$ leads to a generalized Atkinson equation: 
\begin{equation*}
	X=\exp\left(A_{0}\right)+R(XB),
\end{equation*}
which yields:
\begin{equation*}
	X_{n}=\exp\left(A_{0}\right)+\sum_{k=0}^{n-1}X_{k}B_{k}.
\end{equation*}
Observe that the solution can be written as a finite product, as well as, using Theorem \ref{thm:pLMagnus}, as an exponential: 
$$
	X_{n}=\exp\left(A_{0}\right)\overrightarrow{\prod_{k=0}^{n-1}}\left(1+B_{k}\right)=\exp\left(R(\Omega'(n))\right).
$$

Now, by applying Theorem \ref{thm:newMagnus}, with $\alpha =A(0)$ and $\Omega'_\alpha=\Delta A$ (recall that $X=\exp(A)$ and $R(\Delta A)=A-A(0)$),
we get:

\begin{theorem}
 With our previous notation, for the finite difference IVP (\ref{eq:findif}) we get:
 \begin{equation*}
	B_{k}=\left(\frac{\exp\left(-\ell_{A_{k}}+r_{A_{k+1}}\right)-Id}
	                       {-\ell_{A_{k}}+r_{A_{k+1}}}\right)\left[\Delta A_{k}\right].
\end{equation*}
where as before $\ell_{X}$ is the left multiplication by $X$ and $r_{X}$ is the right multiplication by $X$.
\end{theorem}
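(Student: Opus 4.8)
The plan is to obtain this formula as a direct corollary of Theorem \ref{thm:newMagnus}, by inverting its recursive equation and then passing from operators on the sequence algebra to their pointwise action. Throughout, the ambient Rota--Baxter algebra is the algebra of (matrix-valued) sequences under the pointwise product, equipped with the finite Riemann sum $R = R_1$ of weight $\theta = 1$, and the dictionary with the notation of Theorem \ref{thm:newMagnus} is $\lambda = 1$, $a = B$, $\alpha = A_0$, and $\Omega'_\alpha = \Delta A$.

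First I would check that this assignment is legitimate. The element $\alpha = A_0$, viewed as the constant sequence, lies in the $R$-center $Z_R(\mathcal{A})$, so the hypotheses of Theorem \ref{thm:newMagnus} apply. Moreover the computation $R(\Delta A) = A - A_0$ recalled just above gives $\alpha + R(\Omega'_\alpha) = A_0 + (A - A_0) = A$, whence $X = \exp(\alpha + R(\Omega'_\alpha)) = \exp(A)$, consistent with the initial value problem (\ref{eq:findif}).

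Next I would invert the recursion. The first line of the proof of Theorem \ref{thm:newMagnus} records precisely the inverse relation
$$\lambda a = \frac{e^{-ad_{\alpha + R(\Omega'_\alpha)} + r_{\theta \Omega'_\alpha}} - 1}{-ad_{\alpha + R(\Omega'_\alpha)} + r_{\theta \Omega'_\alpha}}(\Omega'_\alpha),$$
so substituting the data above ($\theta = 1$, $\alpha + R(\Omega'_\alpha) = A$, $\Omega'_\alpha = \Delta A$) yields
$$B = \frac{e^{-ad_A + r_{\Delta A}} - 1}{-ad_A + r_{\Delta A}}(\Delta A).$$
It then remains to simplify the operator in the exponent. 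Using the decomposition $ad_A = \ell_A - r_A$ together with the linearity of right multiplication in its multiplier, I get $-ad_A + r_{\Delta A} = -\ell_A + r_A + r_{\Delta A} = -\ell_A + r_{A + \Delta A}$.

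The one genuinely finite-difference-specific step --- and the point I expect to require the most care --- is the pointwise evaluation of this operator. Since $(\Delta A)_k = A_{k+1} - A_k$, the sequence $A + \Delta A$ has $k$-th entry $A_k + (A_{k+1} - A_k) = A_{k+1}$, so at index $k$ the operator $r_{A + \Delta A}$ is right multiplication by $A_{k+1}$ while $\ell_A$ is left multiplication by $A_k$. Reading off the $k$-th component of the displayed formula for $B$ therefore gives
$$B_k = \frac{\exp(-\ell_{A_k} + r_{A_{k+1}}) - \id}{-\ell_{A_k} + r_{A_{k+1}}}\bigl[\Delta A_k\bigr],$$
which is the claimed identity. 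As usual the quotient $(e^{X}-1)/X$ is to be understood as the everywhere-defined power series $\sum_{n \ge 0} X^n/(n+1)!$, so no invertibility of the operator $-\ell_{A_k} + r_{A_{k+1}}$ is needed.
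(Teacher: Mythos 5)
Your proposal is correct and takes essentially the same route as the paper, which obtains the theorem precisely by applying Theorem \ref{thm:newMagnus} with $\alpha=A_0$, $\Omega'_\alpha=\Delta A$ and weight $\theta=1$; your inversion of the recursion and the pointwise simplification $-ad_A+r_{\Delta A}=-\ell_A+r_{A+\Delta A}$, giving $-\ell_{A_k}+r_{A_{k+1}}$ at index $k$, is exactly what that application amounts to. (The paper additionally records an independent direct verification via a discrete Duhamel formula, but offers it only ``for the sake of transparency.'')
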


In particular:
\begin{equation}
	\exp\left(-A_{0}\right)\exp\left(A_{N}\right)
	=\overrightarrow{\prod_{k=0}^{N-1}}\left(1+\left(\frac{\exp\left(-\ell_{A_{k}}+r_{A_{k+1}}\right)-1}
	{-\ell_{A_{k}}+r_{A_{k+1}}}\right)\left[\Delta A_{k}\right]\right)\label{eq:dismag}
\end{equation}
 
Note that these formulas can also be obtained from a straightforward calculation. We include a direct proof for the sake of transparency. The starting point is a discrete analog of Duhamel's formula (\ref{eq:duh}):

\begin{lemma}
We have:
\begin{equation}
	\exp\left(-A_{k}\right)\Delta\left[\exp\left(A_{k}\right)\right]
	=\int_{0}^{1}d\nu \exp\left(-\nu A_{k}\right)\left(\Delta A_{k}\right)\exp\left(\nu A_{k+1}\right).
\label{eq:disduh}
\end{equation}
\end{lemma}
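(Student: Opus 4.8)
The plan is to mimic exactly the classical proof of Duhamel's formula (\ref{eq:duh}), replacing the time derivative by the finite difference $\Delta$. The only structural input I need is the definition $\Delta(f)(k)=f(k+1)-f(k)$, so that the left-hand side expands as
\begin{equation*}
	\exp(-A_k)\,\Delta[\exp(A_k)] = \exp(-A_k)\bigl(\exp(A_{k+1})-\exp(A_k)\bigr) = \exp(-A_k)\exp(A_{k+1}) - \un .
\end{equation*}
This rewriting is the one genuinely \emph{discrete} step; everything afterwards is continuous calculus in the auxiliary variable $\nu$.

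The key idea is to introduce the interpolating function $g(\nu):=\exp(-\nu A_k)\exp(\nu A_{k+1})$ on $[0,1]$, which satisfies $g(0)=\un$ and $g(1)=\exp(-A_k)\exp(A_{k+1})$. Thus by the fundamental theorem of calculus $g(1)-g(0)=\int_0^1 g'(\nu)\,d\nu$, and by the previous display this already equals the left-hand side of (\ref{eq:disduh}). It therefore remains only to compute $g'(\nu)$ and check it coincides with the integrand on the right.

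The derivative is obtained by the product rule, using that each matrix commutes with its own exponential, namely $\frac{d}{d\nu}\exp(-\nu A_k)=\exp(-\nu A_k)(-A_k)$ and $\frac{d}{d\nu}\exp(\nu A_{k+1})=A_{k+1}\exp(\nu A_{k+1})$. This yields
\begin{equation*}
	g'(\nu)=\exp(-\nu A_k)(-A_k)\exp(\nu A_{k+1})+\exp(-\nu A_k)A_{k+1}\exp(\nu A_{k+1})=\exp(-\nu A_k)(\Delta A_k)\exp(\nu A_{k+1}),
\end{equation*}
where in the last equality I recollect the middle factor as $A_{k+1}-A_k=\Delta A_k$. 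Integrating over $\nu\in[0,1]$ gives precisely the right-hand side of (\ref{eq:disduh}), completing the proof.

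I do not anticipate a serious obstacle here: the statement is elementary and the proof is two lines once the interpolation is set up. The only point demanding a little care is that $A_k$ and $A_{k+1}$ do \emph{not} commute, so one must not attempt to merge the two exponentials; the computation of $g'(\nu)$ only ever commutes each $A$ with exponentials of the \emph{same} argument, which is legitimate, and it is exactly this noncommutativity that prevents the integral from collapsing to $\Delta A_k$ and keeps the $\nu$-integration genuinely present. Hence the mild subtlety, if any, is merely bookkeeping in the product rule rather than a conceptual difficulty.
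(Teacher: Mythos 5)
Your argument is correct, and it rests on the same one-parameter interpolation idea as the paper's proof, but it is executed along a genuinely different path. The paper differentiates $h(\nu):=\exp(-\nu A_k)\,\Delta[\exp(\nu A_k)]$ in $\nu$ while keeping the difference operator $\Delta$ symbolic throughout; its computation hinges on the generalized Leibniz rule $\Delta(fg)=f\,\Delta g+(\Delta f)\,g+(\Delta f)(\Delta g)$ --- the cross term is essential --- together with the identity $\exp(\nu A_k)+\Delta[\exp(\nu A_k)]=\exp(\nu A_{k+1})$, which is how the shifted exponential $\exp(\nu A_{k+1})$ enters the integrand. You instead unpack $\Delta$ at the very start, writing $\Delta[\exp(A_k)]=\exp(A_{k+1})-\exp(A_k)$, and then run the standard continuous Duhamel computation on $g(\nu)=\exp(-\nu A_k)\exp(\nu A_{k+1})$. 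Since $h(\nu)=g(\nu)-\un$, the two interpolants differ by a constant, so the derivative computations coincide and the proofs are mathematically equivalent; your endpoint bookkeeping ($g(0)=\un$, $h(0)=0$) is consistent with this. What your route buys is economy: no discrete Leibniz rule is needed, only the ordinary product rule and the fact that each matrix commutes with the exponential of itself --- and you correctly flag that $A_k$ and $A_{k+1}$ need not commute, which is what keeps the $\nu$-integral nontrivial. What the paper's route buys is thematic coherence: it exhibits precisely the weight-one Leibniz structure of $\Delta$ that section \ref{ssect:finitediff} is meant to illustrate, i.e.\ the finite-difference counterpart of the calculation proving the continuous Duhamel formula (\ref{eq:duh}), with the extra $\Delta f\,\Delta g$ term playing the role of the nonzero Rota--Baxter weight.
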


The proof follows by taking the integral from $0$ to $1$ of the following equality:
\begin{eqnarray*}
	\frac{d}{d\nu}\left[\exp\left(-\nu A_{k}\right)\Delta\left[\exp\left(\nu A_{k}\right)\right]\right] 
	& = & -\exp\left(-\nu A_{k}\right)A_{k}\Delta\left[\exp\left(\nu A_{k}\right)\right]
		+\exp\left(-\nu A_{k}\right)\Delta\left[A_{k}\exp\left(\nu A_{k}\right)\right]\\
 	& = & -\exp\left(-\nu A_{k}\right)A_{k}\Delta\left[\exp\left(\nu A_{k}\right)\right]
		+\exp\left(-\nu A_{k}\right)A_{k}\Delta\left[\exp\left(\nu A_{k}\right)\right]\\
 	&    & +\exp\left(-\nu A_{k}\right)\Delta\left[A_{k}\right]\exp\left(\nu A_{k}\right)
		+\exp\left(-\nu A_{k}\right)\Delta\left[A_{k}\right]\Delta\left[\exp\left(\nu A_{k}\right)\right]\\
	& = & \exp\left(-\nu A_{k}\right)\Delta\left[A_{k}\right]\exp\left(\nu A_{k+1}\right).
\end{eqnarray*}

Now we verify identity (\ref{eq:dismag}). The first thing to remark is that:
\begin{equation*}
	\exp\left(-A_{0}\right)\exp\left(A_{N}\right)
	=\overrightarrow{\prod_{k=0}^{N-1}}\left(1+\exp\left(-A_{k}\right)\Delta\left[\exp\left(A_{k}\right)\right]\right)\label{eq:dismag-1}
\end{equation*}
follows when we note that $1+\exp\left(-A_{k}\right)\Delta\left[\exp\left(A_{k}\right)\right]=\exp\left(-A_{k}\right)\exp\left(A_{k+1}\right).$ Then, we first apply Duhamel's formula (\ref{eq:disduh}), and second expand $\exp\left(-\nu A_{k}\right)$ and $\exp\left(\nu A_{k+1}\right),$ and eventually we do the explicit integration of $\nu$. This yields:
 \begin{equation}
	\exp\left(-A_{0}\right)\exp\left(A_{N}\right)
	=\overrightarrow{\prod_{k=0}^{N-1}}\left(1+\sum_{N=1}^{\infty}\frac{1}{N!}
	\sum_{n=0}^{N-1}\left(-1\right)^{n}C_{N-1}^{n}\left(A_{k}\right)^{n}
	\Delta\left[A_{k}\right]\left(A_{k+1}\right)^{N-1-n}\right).
\label{eq:dismag-1-1}
\end{equation}

Finally, we note that: 
 \begin{equation*}
	\sum_{n=0}^{N-1}\left(-1\right)^{N-1-n}C_{N-1}^{n}\left(A_{k}\right)^{n}
	\Delta\left[A_{k}\right]\left(A_{k+1}\right)^{N-1-n}
	=\left(l_{A_{k}}-r_{A_{k+1}}\right)^{N-1}\Delta\left[A_{k}\right],
\end{equation*}
such that:
 \begin{eqnarray*}
	\sum_{N=1}^{\infty}\frac{1}{N!}\sum_{n=0}^{N-1}\left(-1\right)^{n}C_{N-1}^{n}\left(A_{k}\right)^{n}
	\Delta\left[A_{k}\right]\left(A_{k+1}\right)^{N-1-n} 
	& = & \sum_{N=1}^{\infty}\frac{\left(-1\right)^{N-1}}{N!}\left(l_{A_{k}}-r_{A_{k+1}}\right)^{N-1}
		\Delta\left[A_{k}\right]\\
 	& = & \left(\frac{\exp\left(-l_{A_{k}}+r_{A_{k+1}}\right)-Id}{-l_{A_{k}}+r_{A_{k+1}}}\right)
		\Delta\left[ A_{k}\right].
\end{eqnarray*}
This relation, together with (\ref{eq:dismag-1-1}) concludes the proof of (\ref{eq:dismag}).\\

\bigskip

{\bf{Acknowledgements}} The third author is supported by a Ram\'on y Cajal research grant from the Spanish government. We thank D.~Manchon for helpful discussions. We thank the CNRS (GDR Renormalisation) for support.\\



\begin{thebibliography}{16}

\bibitem{Aczel}
	J.~Acz\'el,
	Lectures on functional equations and their applications, 
	Mathematics in Science and Engineering {\bf{19}}, Academic Press, 1966.
	
\bibitem{AD}
	J.~Acz\'el, J.~Dhombres, 
	Functional Equations in Several Variables: With Applications to Mathematics, Information Theory 
	and to the Natural and Social Sciences, 
	Encyclopedia of Mathematics and its Applications {\bf{31}}, Cambridge University Press 1989.

\bibitem{Atkinson}
    	F.~V.~Atkinson,
    	{\textit{Some aspects of Baxter's functional equation}},
    	J.~Math.~Anal.~Appl.~{\bf{7}} (1963) 1.

\bibitem{BGL}
	S.~Bachmann, G.M.~Graf, G.B.~Lesovik,
	{\textsl{Time Ordering and Counting Statistics}}, 
	J.~Stat Phys.~{\bf{138}} (2010) 333.
	
\bibitem{Baxter}
    	G.~Baxter,
    	{\textsl{An analytic problem whose solution follows from a simple algebraic identity}},
    	Pacific~J.~Math.~{\bf{10}} (1960) 731.
		
\bibitem{BCOR09}
    	S.~Blanes, F.~Casas, J.A.~Oteo, J.~Ros,
    	{\textsl{Magnus expansion: mathematical study and physical applications}},
    	Phys.~Rep.~{\bf{470}} (2009) 151.

\bibitem{Cartier}	
	P.~Cartier, 
	{\textsl{On the structure of free Baxter algebras}}, 
	Adv.~Math.~{\bf{9}} (1972) 253.

\bibitem{CP}	
	F.~Chapoton, F.~Patras,
	{\textsl{Enveloping algebras of preLie algebras, Solomon idempotents and the Magnus formula}},
	preprint 2012. 
	\texttt{arXiv:1201.2159v1 [math.QA]}

\bibitem{CK}
	A.~Connes and D.~Kreimer,
	{\textit{Renormalization in quantum field theory and the Riemann--Hilbert problem I:
	The Hopf algebra structure of graphs and the main theorem}},
	Commun.~Math.~Phys.~{\bf{210}} (2000) 249.

\bibitem{EGP}
        K.~Ebrahimi-Fard, J. Gracia-Bondia, F. Patras,
        {\textsl{Rota-Baxter algebras and new combinatorial identities}}
        Letters in Math. Physics {\bf{81}} (2007), 61.
        
\bibitem{EM1}
	K.~Ebrahimi-Fard, D.~Manchon,
	{\textsl{The combinatorics of Bogoliubov's recursion in renormalisation}},
    	in `Renormalization and Galois theories', IRMA Lect.~Math.~Theor.~Phys. {\bf{15}} (2009) 179.
	
\bibitem{EM2}
	K.~Ebrahimi-Fard, D.~Manchon,
	{\textsl{A Magnus- and Fer-type formula in dendriform algebras\/}}, 
	Found.~Comput.~Math.~{\bf{9}} (2009) 295.

\bibitem{EM3}
	K.~Ebrahimi-Fard, D.~Manchon,
	{\textsl{Dendriform Equations}},
    	 Journal of Algebra {\bf{322}} (2009) 4053.
         
\bibitem{EMP}
	 K.~Ebrahimi-Fard, D.~Manchon, F.~Patras,
	{\textit{A noncommutative Bohnenblust--Spitzer identity for Rota--Baxter algebras solves Bogoliubov's recursion}},
	Journal of Noncommutative Geometry {\bf{3}} (2009) 181.

\bibitem{Fried}
	H.~Fried,
	Green's Functions and Ordered Exponentials,
	Cambridge University Press, 2005.
	
\bibitem{gelfand}
    	I.~M.~Gelfand, D.~Krob, A.~Lascoux, B.~Leclerc, V.~Retakh, J.-Y.~Thibon,
    	{\textsl{Noncommutative symmetric functions}},
    	Adv.~Math.~{\bf{112}} (1995) 218.

\bibitem{IMKNZ}
    	A.~Iserles, H.~Z.~Munthe-Kaas, S.~P.~N{\o}rsett, A.~Zanna,
    	{\textsl{Lie-group methods}},
    	Acta Numerica {\bf{9}} (2000) 215.

\bibitem{Kingman}
	J.~F.~C.~Kingman,
	{\textsl{Spitzer's Identity and its use in Probability Theory}}
	J.~London Math.~Soc.~{\bf{37}} (1962) 309.

\bibitem{Magnus}
    	W.~Magnus,
    	{\textsl{On the exponential solution of differential equations for a linear operator}},
    	Commun.~Pure Appl.~Math.~{\bf{7}} (1954) 649.

\bibitem{MielPleb}
    	B.~Mielnik, J.~Pleba\'nski,
    	{\textsl{Combinatorial approach to Baker--Campbell--Hausdorff exponents}}
    	Ann.~Inst.~Henri Poincar\'e A {\bf{XII}} (1970) 215.

\bibitem{Murua}
	A.~Murua, 
	{\textsl{The Hopf algebra of rooted trees, free Lie algebras, and Lie series\/}}, 
	Found.~Comput.~Math.~{\bf{6}} (2006) 387.

\bibitem{Rota1} 
	G.-C.~Rota, 
	{\textit{Baxter algebras and combinatorial identities I, II,}} 
	Bull.~Amer.~Math.~Soc.~{\bf{75}} (1969) 325.

\bibitem{Rota2} 
	G.-C.~Rota, 
	{\textit{Ten mathematics problems I will never solve}}, 
	DMV Mitteilungen {\bf{2}} (1998) 45.

\bibitem{RS}
	G.-C.~Rota, D.A.~Smith, 
	{\textsl{Fluctuation theory and Baxter algebras}}, 
	Istituto Nazionale di Alta Matematica {\bf{IX}} (1972) 179.

\bibitem{Spitzer} 
	F.~Spitzer, 
	{\textit{A combinatorial lemma and its application to probability theory}},
        	Trans.~Amer.~Math.~Soc.~{\bf{82}} (1956) 323.

\bibitem{Strichartz}
    	R.~S.~Strichartz,
    	{\textsl{The Campbell--Baker--Hausdorff--Dynkin formula and solutions of differential equations}},
    	J.~Func.~Anal.~{\bf{72}} (1987) 320.

\bibitem{Talkner}	
	P.~Talkner, E.~Lutz, and P.~Hanggi
	{\textsl{Fluctuation theorems: Work is not an observable}},
	Phys.~Rev.~E (Rapid Communication) {\bf{75}}, (2007) 050102. 
	
\bibitem{Vogel}
	W.~Vogel, 
	{\textsl{Die kombinatorische L\"osung einer Operator-Gleichung}}, 
	Z.~Wahrscheinlichkeitstheorie {\bf{2}} (1963) 122.

\bibitem{Wilcox}
    	R.~M.~Wilcox,
    	{\textsl{Exponential operators and parameter differentiation in quantum physics}},
    	J.~Math.~Phys.~{\bf{8}} (1967) 962.

\end{thebibliography}
\end{document}